\newtheorem{theorem}{Theorem}
\newtheorem{proposition}[theorem]{Proposition}
\newtheorem{property}{Property}
\theoremstyle{remark}
\newtheorem{example}{Example}
\DeclareMathOperator{\proj}{Proj}
\newcommand{\su}[2][1]{\ensuremath{\displaystyle{\sum_{#1}^{#2}}}}
\newcommand{\R}{\mathbb{R}}
\newcommand{\Z}{\mathbb{Z}}
\newcommand{\lcup}{\cup} 
\newcommand{\xbar}{\overline{x}}
\newcommand{\ybar}{\overline{y}}
\newcommand{\xp}{x^{\prime}}
\newcommand{\yp}{y^{\prime}}
\renewcommand{\P}{\mathbb{P}}
\renewcommand{\xi}{f}
\newcommand{\Ps}{\Om}
\newcommand{\ps}{\om}
\newcommand{\Om}{\Omega}
\newcommand{\om}{\omega}
\newcommand{\subtree}{\textit{subtree}}
\newcommand*\samethanks[1][\value{footnote}]{\footnotemark[#1]}
\title{Efficient storage of Pareto points in biobjective mixed integer programming}
\author[1]{Nathan Adelgren\thanks{Part of the work was carried out at Department of Mathematical Sciences, Clemson University}\textsuperscript{,}\thanks{Partial support by ONR grant  N00014-16-1-2725 is acknowledged}}
\affil[1]{Department of Mathematics and Computer Science, Edinboro University, Edinboro, PA 16444, \texttt{nadelgren@edinboro.edu}}
\author[2]{Pietro Belotti}
\affil[2]{FICO, International Square, Starley Way, Birmingham B37 7GN, United Kingdom, 
\texttt{pietrobelotti@fico.com}}
\author[3]{Akshay Gupte\samethanks}
\affil[3]{Department of Mathematical Sciences, Clemson University, Clemson, SC 29634,
\texttt{agupte@clemson.edu}}
\date{}
\begin{document}

\maketitle

\begin{abstract}
In biobjective mixed integer linear programs (BOMILPs), two linear objectives are minimized over a polyhedron while restricting some of the variables to be integer. Since many of the techniques for finding or approximating the Pareto set of a BOMILP use and update a subset of nondominated solutions, it is highly desirable to efficiently store this subset. We present a new data structure, a variant of a binary tree that takes as input points and line segments in $\R^2$ and stores the nondominated subset of this input. When used within an exact solution procedure, such as branch-and-bound (BB), at termination this structure contains the set of Pareto optimal solutions. 

We compare the efficiency of our structure in storing solutions to that of a dynamic list which updates via pairwise comparison. Then we use our data structure in two biobjective BB techniques available in the literature and solve three classes of instances of BOMILP, one of which is generated by us. The first experiment shows that our data structure handles up to $10^7$ points or segments much more efficiently than a dynamic list. The second experiment shows that our data structure handles points and segments much more efficiently than a list when used in a BB.
\end{abstract}



\section{Introduction}

Biobjective mixed integer linear programs (BOMILP) have the following form:
\begin{equation}\label{BOMILP}
\begin{array}{cl}
\min_{x,y} & f(x,y):= \left[f_1(x,y) := c_1^{\top}x+d_1^{\top}y, \: f_2(x,y) := c_2^{\top}x+d_2^{\top}y\right]\\
\text{s.t.} & (x,y) \in P_{I}:= \{(x,y)\in\R^{m}\times\Z^{n}\colon Ax+By\leq b\},
\end{array}
\end{equation}
where $P_I$ is a bounded set.  
Define as $\Ps := \{\ps\in \R^2: \ps=f(x,y) \,\,\forall (x,y)\in P_I\}$ the collection of all points in $\R^2$ which can be obtained using the objective function values of feasible solutions to \eqref{BOMILP}. We refer to the space $\R^2$ containing $\Ps$ as the \emph{objective space}.

For any two vectors $v^1,v^2\in \R^2$ we use the following notation: $v^1 \leqq v^2$ if $v_i^1 \leq v_i^2$ for $i = 1, 2$; $v^1 \leq v^2$ if $v^1 \leqq v^2$ and
$v^1 \neq v^2$; and $v^1 < v^2$ if $v_i^1 < v_i^2$ for $i = 1, 2$. Given distinct $(\xbar,\ybar),(\xp,\yp)\in P_{I}$, we say that $f(\xbar,\ybar)$ \emph{dominates} $f(\xp,\yp)$ if $f(\xbar,\ybar) \le f(\xp,\yp)$. 
This dominance is \emph{strong} if $f(\xbar,\ybar) < f(\xp,\yp)$; otherwise it is \emph{weak}. 
A point $(\overline{x},\overline{y}) \in P_I$ is (\emph{weakly}) \emph{efficient} if $\nexists \,\, (x^{\prime},y^{\prime}) \in P_I$ such that $f(x^{\prime},y^{\prime})$ (strongly) dominates $f(\overline{x},\overline{y})$. The set of all efficient solutions in $P_I$ is denoted by $X_E$. A point $\overline{\ps} = f(\overline{x},\overline{y})$ is called \emph{Pareto optimal} if $(\overline{x},\overline{y}) \in X_{E}$. Given ${\Ps}' \subseteq \Ps$ we say that $\ps' \in \Ps'$ is \emph{nondominated} in $\Ps'$ if $\nexists \,\, \ps''\in \Ps'$ such that $\ps''$ dominates $\ps'$. Note that Pareto optimal points are nondominated in $P_I$. We consider a BOMILP solved when 
the set of Pareto optimal points $\Om_P := \{\om\in\R^2 : \om = f(x,y) \,\, \forall (x,y)\in X_{E} \}$ is found. 

It is known \citep{ehrgott2005multicriteria} that a biobjective LP (BOLP) can be solved by taking convex combinations of $f_{1}(\cdot)$ and $f_{2}(\cdot)$ and solving a finite number of LPs. Thus for BOLP, the set of Pareto points can be characterized as $\Omega_P = \{(\xi_{1},\xi_{2})\in\R^2\colon \xi_{2}=\psi(\xi_{1})\}$ where $\psi(\cdot)$ is a continuous, convex, piecewise linear function obtained using extreme points of the feasible region. For biobjective integer programs (BOIPs) it is known that $\Omega_P$ is a finite set of points in $\R^{2}$. Now consider the case of BOMILP. Let $Y = \proj_{y} P_{I}$ be the set of integer feasible subvectors to \eqref{BOMILP}. Since $P_{I}$ is bounded, we have $Y = \{y^{1},\ldots,y^{k}\}$ for some finite $k$. Then for each $y^{i}\in Y$ there is an associated BOLP, referred to as a \emph{slice problem} and denoted $\P(y^i)$, obtained by fixing $y=y^{i}$ in \eqref{BOMILP}:
\begin{equation}\label{slice}
\begin{array}{rl}
\P(y^i) \qquad {\min_x} & \{ f_1(x) = c_1^{\top}x+d_1^{\top}y^i, \: f_2(x) = c_2^{\top}x+d_2^{\top}y^i\}\\
\text{s.t.} & Ax \leq b - By^i.\\
\end{array}
\end{equation}
Problem $\P(y^i)$ has a set of Pareto solutions $\Omega_i := \{(\xi_{1},\xi_{2})\in\R^2\colon \xi_{2} = \psi_{i}(\xi_{1}) \}$, where $\psi_{i}(\cdot)$ is a continuous convex piecewise linear function as explained before. Then $\Omega_P \subseteq {\mathop{\lcup}_{i=1}^k \Omega_i}$ and this inclusion is strict in general. In particular, we have: 
\begin{equation}\Omega_P = \mathop{\lcup}_{i=1}^k\left( \Omega_i \setminus \mathop{\lcup}_{j\neq i}\left(\Omega_j+\R^2_+\setminus\{\textbf{0}\}\right)\right).
\end{equation} 
Such union of sets is not, in
general, represented by a convex piecewise linear function. Figure \ref{fig:store_data} shows an example with $k=5$.


It should be noted that finding $\Omega_P$ is not a trivial task in general. In the worst case, $\Omega_P = \cup_{i=1}^{k} \Omega_{i}$ and one may have to solve every slice problem to termination, which can have exponential complexity. For multiobjective IPs (i.e. $m=0$), \citet{de2009pareto} prove that $\Omega_P$ can be enumerated in polynomial-time for fixed $n$, which extends the well known result that single-objective IP's can be solved in polynomial-time for fixed $n$ \citep{lenstra1983integer}. We are unaware of any similar results for BOMILP.

Exact procedures for solving BOMILP with general integers have been recently the subject of intense research. Exact methods have been presented by  \citet{belotti2012biobjective,belotti2016fathoming}, \citet{boland2013biobjective}, and more recently by \citet{soylu2016exact}. \citet{ozpeynirci2010exact} give an exact method for finding {\em supported} solutions of BOMILP. Most other techniques in the literature tackle specific cases. \citet{vincent2013biobjective} improved upon the method of \citet{mavrotas2005multi} for mixed 0-1 problems. \citet{stidsen2014branch} propose a method for solving mixed 0-1 problems in which only one of the objectives contains continuous variables. \citet{belotti2012biobjective,mavrotas2005multi,stidsen2014branch} and \citet{vincent2013biobjective} are based on branch-and-bound (BB) procedures in which the Pareto set is determined by solving several BOLPs. Instead, \citet{boland2013biobjective}, \citet{ozpeynirci2010exact}, and \citet{soylu2016exact} determine the Pareto set by solving several MIPs, albeit in different ways: while, for example, \citet{boland2013biobjective} recursively partition the objective space to circumscribe subsets of Pareto points and segments, \citet{soylu2016exact} find the Pareto frontier incrementally, starting from a solution to a single-objective problem. The pure integer case has been studied for binary variables \citep{kiziltan1983algorithm}, general integers \citep{ralphs2006improved} and specific classes of combinatorial problems \citep{sourd2008multiobjective,przybylski2010149,jozefowiez2012generic}.

We present a data structure for efficiently storing a nondominated set of feasible solutions to a BOMILP. For lack of better names, we call it {\em biobjective tree}, or BoT. The BoT can be used in exact and heuristic solution procedures that aim at finding or approximating the Pareto set. Data structures such as {\em quad-trees} have been used for storing Pareto points in the past \citep{sun1996,sun2006}, although only in the pure integer case. \citet{sun1996} stored nondominated solutions using both quad-trees and dynamic lists which were updated via pairwise comparison. They showed that in the pure integer, biobjective case, dynamic lists store nondominated solutions more efficiently than quad-trees.

In Section \ref{overview} we describe the BoT, and in Section \ref{insertion} we describe the insertion function, prove its correctness and provide an example of use. In Section \ref{results} we present the results of two experiments. The first experiment shows that a BoT is able to store nondominated points and segments more efficiently than a dynamic list and can insert up to $10^7$ solutions in reasonable time. In the second experiment we utilize a BoT in the BB procedures of \citet{belotti2012biobjective} and \citet{adelgren2016} to solve specific instances of BOMILP. The results show that BoT handles points and segments much more efficiently than a list when used in a BB.

\section{Biobjective tree (BoT)}
\label{overview}

Figure \ref{gen_pts_sgmts} shows an example of solutions that might be generated when solving an instance of BOMILP. We would like to store the nondominated portion of these points and segments, as shown in Figure \ref{store_pts_sgmts}. 
\begin{figure}
\centering
\subfigure[Generated points and segments]{
\includegraphics[width=5cm]{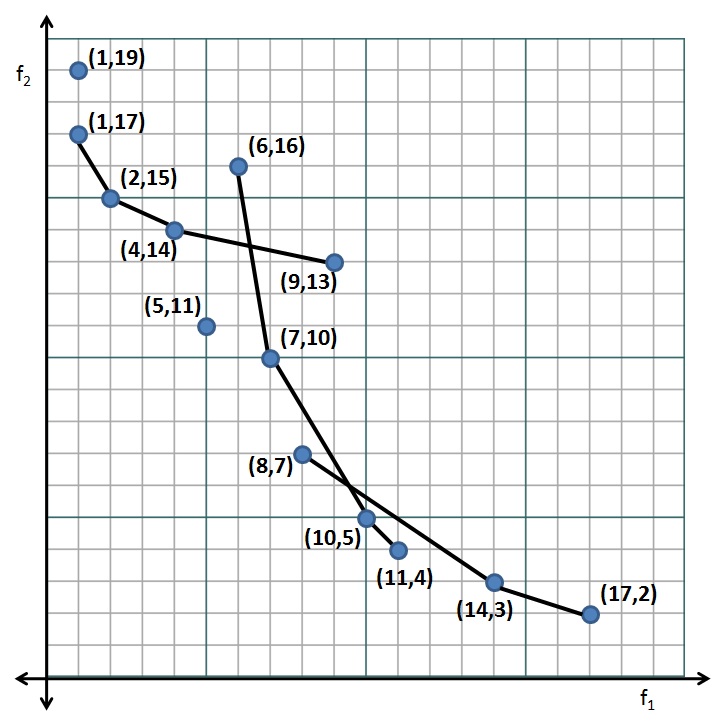}
\label{gen_pts_sgmts}
}\hspace{1cm}
\subfigure[Nondominated subset]{
\includegraphics[width=5cm]{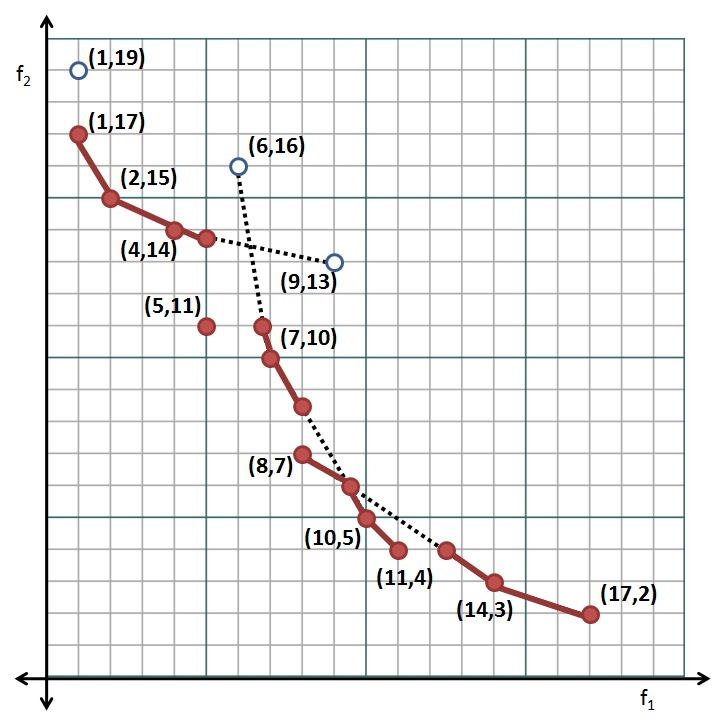} 
\label{store_pts_sgmts}
}
\caption{Example of solutions generated when solving an instance of BOMILP.}
\label{fig:store_data}
\end{figure} 
A BoT stores only the nondominated subset of the solutions regardless of the order in which they are inserted. Consider the set of solutions  in Figures~\ref{gen_pts_sgmts} and suppose that the segments connecting (1,17), (2,15), (4,14), and (9,13) are currently stored. When inserting the point (5,11), which dominates a portion of the segment connecting (4,14) and (9,13), the portion must be removed before the point is added. Similarly, when the segment connecting (6,16) and (7,10) is inserted, because a portion of this segment is dominated by (5,11), only the nondominated portion of the segment should be added.
 
A BoT is a binary search tree (BST) in which each node represents either a singleton or a line segment associated with a Pareto point or set of Pareto points in the objective space. Denote as  $\Pi$ the set of nodes in the tree. Each $\pi \in \Pi$ is defined as a triplet $\pi = (S,l,r)$, its components defined as follows:

\begin{itemize}
\item $S=(x_1,x_2,y_1,y_2)$ represents a segment $[(x_1,y_1),(x_2,y_2)]$ defined in the objective space; the segment collapses to a point if $x_1=x_2$ and $y_1=y_2$.
\item $l$ and $r$ are the left and the right child node of $\pi$, respectively.
\end{itemize}

We identify each element of a triplet $\pi$ as $\pi.S$, $\pi.l$, and $\pi.r$. Let us define operators that are used in the remainder: $\textit{size}(\pi)$ is the size of the subtree rooted at $\pi$, defined as $\textit{size}(\emptyset) = 0$ and $\textit{size}(\pi) = 1 + \textit{size}(\pi.l) + \textit{size}(\pi.r)$; also, $\textit{parent}(\pi)$ is a node $\pi'$ such that $\pi'.l=\pi$ or $\pi'.r=\pi$, if any exists, or $\emptyset$ otherwise.   Hence $\pi$ is a {\em leaf node} if $\pi.l =\pi.r = \emptyset$, or the {\em root node} if $\textit{parent}(\pi)=\emptyset$.  Finally, $\subtree(\pi)$ is the subtree rooted at $\pi$. In the remainder, even though a node $\pi$ is defined by a triplet, we sometimes use $\pi$ to refer to $\pi.S$ for ease of notation (especially with set operations), but only where we believe this does not lead to confusion. In particular, for $\pi=(S,l,r)$ the operation $\pi \cap A$, where $A\subseteq \mathbb R^2$, returns a point $\pi'=(S',l,r)$ such that $S' = S \cap A$.


A BoT contains nodes that correspond each to a segment $S = [(x_1,y_1), (x_2,y_2)]$ such that $x_1 \le x_2$ and $y_1 \ge y_2$, as otherwise $S$ can be reduced to a single point. The BoT maintains a non-strict total order $\preceq$ between nodes: for two nodes $\pi' = (S',l',r'), \pi'' = (S'',l'',r'') \in \Pi$, with $S'=[(x'_1,y'_1),(x'_2,y'_2)]$ and $S'' = [(x''_1,y''_1), (x''_2,y''_2)]$,  $\pi' \preceq \pi''$ if either $\pi'$ and $\pi''$ are the same, or $x'_2 \le  x''_1$ and $y'_2 \ge y''_1$. Our notation extends to node segments as well, i.e., $\pi' \preceq \pi''\equiv \pi'.S \preceq \pi''.S$. Also, for any $\pi\in\Pi$, we have $\pi.l \prec \pi \prec \pi.r$.

All operations on the BoT must conserve this total order; as for any BST, enumerating its sorted elements amounts to an {\em in-order} parse of the tree. Removal of a subtree from the BST and rebalancing subtrees preserves the order \citep{knuth1998art}. Most tree operations carry over to the BoT without change, but insertion, discussed in Section \ref{insertion}, is radically different: 
a BST insertion increases the tree size by one, while in a BoT inserting a single node might have a large-scale effect. For instance, an entire subtree might be deleted if the inserted segment $S$ dominates all nodes of the subtree.  Even if $S$ does not dominate any of the current nodes, the non-dominated portion of $S$ could be as many as $t+1$ disjoint segments (if $t$ nodes are stored).

This is the main point of departure with classical BSTs and other data structures (red-black trees, for instance) that have fast insertion and lookup. This also explains why other data structures such as a list or a hash-table might be less suited for this purpose: lookup and insertion in the list are rather expensive at $O(t)$ if $t$ is the number of elements in the list; insertion of a point in a hashtable and the possible deletion of a large number of elements is inefficient in a hash table for many reasons. One such reason is {\em locality} of the data: a good hash function would guarantee large separation, in the hash table, between two adjacent points or segments in the objective space. Finding all segments that are dominated by one point or segment would prove inefficient, as it would require visiting the entire data structure.



The following notation is useful in the remainder of the paper: we partition $\mathbb R^2$ relative to the segment $\pi.S = [(x_1,y_1), (x_2,y_2)]$ of a node $\pi$ into four regions $R_\alpha(\pi)$ for $\alpha \in \{\textrm{up}, \textrm{dn}, \textrm{left}, \textrm{right}\}$. We define $R_{\textrm{up}}(\pi)    = \pi.S + \{(x,y)\in \R^2: x   > 0, y   > 0\}$,  
  i.e., the set of points dominated by $\pi$; $R_{\textrm{dn}}(\pi)    = \pi.S + \{(x,y)\in \R^2: x \le 0, y \le 0\}$;
$R_{\textrm{left}}(\pi)  = \{(x,y)\in \R^2: x \le x_1, y   > y_1\}$; and finally
$R_{\textrm{right}}(\pi) = \{(x,y)\in \R^2: x   > x_2, y \le y_2\}$  (see Figure \ref{fig:partition}).

\begin{figure}
\begin{center}
\subfigure[Segment]{
\includegraphics[width=.25\textwidth]{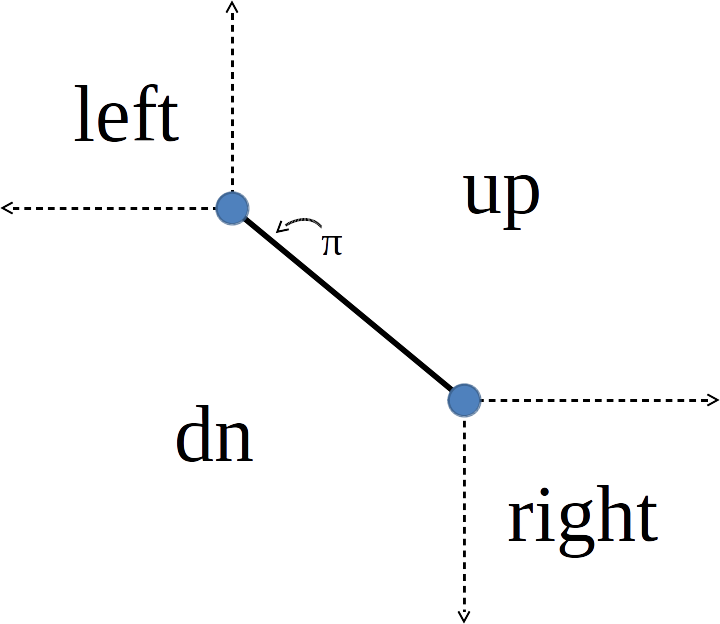}
\label{partition_sgmt}
}\hfill
\subfigure[Single point]{
\includegraphics[width=.25\textwidth]{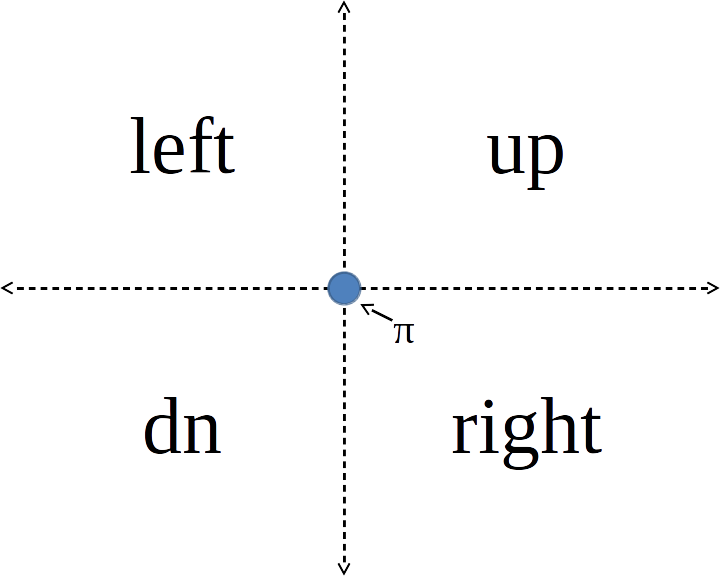}
\label{partition_pnt}
}\hfill
\subfigure[Weak domination.]{
\begin{tikzpicture}
\node[anchor=south west,inner sep=0] (image) at (0,0) {\includegraphics[width=.25\textwidth]{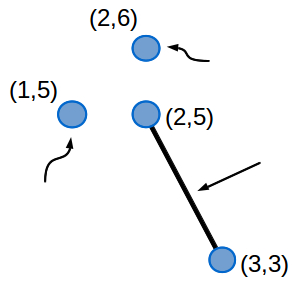}};
\node[yshift=-.85cm,xshift=-1.3cm] at (image.center) { $\pi_1$};
\node[yshift=-0.19cm,xshift=1.7cm] at (image.center) { $\pi$};
\node[yshift=1.1cm,xshift=1.1cm] at (image.center) { $\pi_2$};
\end{tikzpicture}
\label{weak_dom}
}
\caption{Partition of $\R^2$ w.r.t.\mbox{} a segment or point $\pi.S$ (a,b); example of weak domination (c).}
\label{fig:partition}
\end{center}
\end{figure}


\begin{example} Suppose $\pi \in \Pi$ is defined by the segment between $(2,5)$ and $(3,3)$. Further suppose that $\pi_1$ is the point $(1,5)$, $\pi_2$ is the point $(2,6)$ and both $\pi_1$ and $\pi_2$ are inserted into the BoT. Observe Figure \ref{weak_dom}.
The point associated with node $\pi_1$ weakly dominates the left-most point of the segment associated with $\pi$ and thus $\pi_1$ should be stored. However, the point associated with node $\pi_2$ is weakly dominated by the segment associated with $\pi$ and so $\pi_2$ should not be stored. 
\end{example}

The previous example leads us to our next choice. Consider $\pi$ and $\pi_1$ in Figure \ref{weak_dom}: we should mark the segment as open at $(2,5)$, i.e., $\pi.S = ](2,5),(3,3)]$ assuming no other points dominate $(3,3)$. We do not memorize whether a segment $\pi.S$ is open or closed on either extreme. Verifying such information would lead to more complex algorithms. Instead, upon termination of the BB, for any node $\pi$ we  check the nodes immediately to the left and to the right to determine if $\pi.S$ is (partially) open.

\section{Insertion}
\label{insertion}

%

The \textsc{Insert} function is described in Algorithm \ref{insert}. It takes two inputs: a node $\pi^*$ to be inserted and a node $\pi$ which is the root of the tree. 
Because a node $\pi$ does not hold any information of all other nodes of the subtree rooted at $\pi$, the insertion of $\pi^*$ might have to be propagated both to $\pi.l$ and to $\pi.r$. For this reason, the \textsc{Insert} function is recursive. If the recursive call inserts $\pi^*$ in $\pi.l=\emptyset$ (or $\pi.r=\emptyset$), then $\pi.l$ becomes $\pi^*$. 
The main call to \textsc{Insert} is done by passing the root node of the tree, denoted as $\pi_0$, as the second argument. 

%


The function \textsc{Replace}($\pi',\pi''$) replaces $\pi'\in\Pi$ with $\pi''\in\Pi$, leaving the BoT otherwise unchanged. We instead denote with  $\pi' \leftarrow \pi''$ the process of replacing $\pi'$ and its entire subtree with $\pi''$ and its entire subtree. \textsc{RemoveNode}$(\pi)$, described in Algorithm \ref{removenode}, deletes $\pi$ if it is a leaf node, otherwise it replaces $\pi$ with a node $\pi'$ of its subtree that retains the total order. For this, $\pi'$ must be adjacent to $\pi$ to the left or right, i.e., it  is the left-most node of the right subtree of $\pi$ or right-most node of the left subtree of $\pi$.

\begin{algorithm}
\small
  \caption{Inserting a new point or segment, $\pi^*$, into a BoT at node $\pi$}
  \label{insert}
  \begin{algorithmic}[1]
    \Function{Insert}{$\pi^*, \pi$} 
      \If{$\pi^* = \emptyset$}{ \Return}\EndIf
      \If{$\pi = \emptyset$}{ \textsc{Replace}$\left(\pi,\pi^*\right)$}
      \Else\State{Define $S':= \pi.S\setminus cl\left(R_{\textrm{up}}(\pi^*)\right)$}
          \If{$S' = \emptyset$}{}
      	      \State \textsc{RemoveNode}$(\pi)$
      	      \State \textsc{Insert}$(\pi^*,\pi)$
          \Else
      	      \If{$\exists S'', S'''$ s.t.
                    $S' = S'' \cup S''' \wedge
                    cl(S'')\cap cl(S''') = \emptyset \wedge S''\preceq S'''$}
      	          \State Create new node $\pi' = (S''', \emptyset, \pi.r)$
      	          \State $\pi.r \leftarrow \pi'$
                  \State $\pi.S \leftarrow S''$
              \Else{ $\pi.S\leftarrow S'$}
              \EndIf
      	      \State \textsc{Insert}$(\pi^* \cap R_{\textrm{left}}(\pi),\pi.l)$
      	      \State \textsc{Insert}$(\pi^* \cap R_{\textrm{right}}(\pi),\pi.r)$
      	  \EndIf
      \EndIf
    \EndFunction
  \end{algorithmic}
\end{algorithm} 

We skip the trivial cases and focus on lines 5 onward in Algorithm \ref{insert}. The set $S'$ is obtained by removing from $\pi.S$ the set of points dominated by $\pi^*$. If $S'=\emptyset$, $\pi$ is entirely dominated by $\pi^*$ and can be removed; the procedure is then called recursively on the new subtree.  If $S'\neq \emptyset$, then it might be the union of at most two subsegments $S''$ and $S'''$. If so, $S'''$ becomes the segment of a new node  $\pi'$, which is assigned the right subtree of $\pi$, and $\pi$ has its segment restricted to $S''$. Otherwise $\pi.S$ is changed to $S'$. Insertion is then called recursively on the left and right subtrees of $\pi$.


The following property is equivalent to the total order mentioned in Section \ref{overview}.

\begin{property}\label{property_1}
For any $\pi\in\Pi$, all nodes in the subtree of $\pi.l$ (resp. $\pi.r$) are located completely within $R_{\textrm{left}}(\pi)$ (resp.\mbox{} $R_{\textrm{right}}(\pi)$).
\end{property}


\begin{algorithm}
\small
  \caption{Remove a node that has been shown to be dominated.}
  \label{removenode}
  \begin{algorithmic}[1]
    \Function{RemoveNode}{$\pi$} 
    \If{$\textit{size}(\pi) = 1$}{ $\pi \leftarrow \emptyset$}
    \Else
        \If{$\textit{size}(\pi.l) > \textit{size}(\pi.r)$}{ $\tilde{\pi} \leftarrow$\textsc{FindRightmostNode}($\pi.l$)}
        \Else{ $\tilde{\pi} \leftarrow$\textsc{FindLeftmostNode}($\pi.r$)}\EndIf
        \State \textsc{Replace}$\left(\pi,\tilde{\pi}\right)$
        \State \textsc{RemoveNode}($\tilde{\pi}$)
    \EndIf
    \EndFunction
  \end{algorithmic}
\end{algorithm}

For a tree with $t$ nodes at the time of an insertion, the worst-case complexity of \textsc{Insert} is $O(t)$ as the entire tree might have to be visited. For example, consider the tree containing all points $\{(i, M-i), i = 1,2,\ldots, M-1\}$ for any $M > 2$. Inserting $\pi^* = (0,0)$, which results in $\pi^*$  replacing the whole tree, requires visiting all nodes of the tree.

\subsection{Correctness}


\begin{proposition}
\label{prop1}
\textsc{Insert} removes any portion of a currently stored node $\pi$ which is dominated by an inserted node $\pi^*$. 
\end{proposition}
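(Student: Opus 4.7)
The plan is to proceed by induction on $\pi.\size$, the number of nodes in the subtree where \textsc{Insert}$(\pi^*,\pi)$ is invoked. The base case $\pi.\size=0$ is vacuous, since there are no stored nodes. For the inductive step, I would separately verify two things at the current node $\pi$: (a) the portion of $\pi$ itself dominated by $\pi^*$ is removed, and (b) for every stored node $\overline{\pi}\in\subtree(\pi.l)\cup\subtree(\pi.r)$, the portion of $\overline{\pi}$ dominated by $\pi^*$ is removed by the recursive calls. Part (a) is essentially read off the algorithm: by the convention on boundaries, $R_2(\pi^*)$ coincides with the set of points weakly dominated by $\pi^*$, so the statement \textsc{Replace}$(\pi,\pi\setminus cl(R_2(\pi^*)))$ deletes exactly the portion of $\pi$ which is dominated by $\pi^*$.

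For part (b), consider first the case in which $\pi\neq\emptyset$ after line~5. Property~\ref{property_1} guarantees $\overline{\pi}\subseteq R_1(\pi)$ if $\overline{\pi}\in\subtree(\pi.l)$, and symmetrically for $\pi.r$. I would then prove the auxiliary claim: if $q\in\overline{\pi}$ is dominated by some $s\in\pi^*$, then $q$ is also dominated by some $s'\in\pi^*\cap R_1(\pi)$. This reduces to observing that $q\in R_1(\pi)$ forces the first coordinate of $s$ to satisfy $s_1\le q_1<\pi.x_1$, so $s$ already lies in $R_1(\pi)\cup R_3(\pi)$; and the $R_3(\pi)$ alternative implies $\pi^*\dom\pi$, which would have left $\pi$ empty after line~5 and thus placed us in the other branch. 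Hence $s\in\pi^*\cap R_1(\pi)$ does the job. Applying the inductive hypothesis to the recursive call \textsc{Insert}$(\pi^*\cap R_1(\pi),\pi.l)$ (whose subtree has strictly fewer nodes) then removes the dominated portion of $\overline{\pi}$; the symmetric argument works for $\pi.r$. One also needs to verify that the possible splitting of $\pi$ into $\pi_1,\pi_2$ preserves Property~\ref{property_1} for the recursion, which is immediate since $\pi_1\cup\pi_2=\pi\setminus cl(R_2(\pi^*))\subseteq\pi$.

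The remaining case is when $\pi$ becomes empty at line~5, meaning $\pi\subseteq cl(R_2(\pi^*))$. Here the algorithm checks whether $\pi.\il$ or $\pi.\ir$ lies in $R_2(\pi^*)$ and, if so, discards the corresponding subtree outright. The key sublemma I would prove is that if $\pi.\il\in R_2(\pi^*)$ then every node of $\subtree(\pi.l)$ is contained in $R_2(\pi^*)$ (and symmetrically for $\pi.r$). This follows from the definition of the local ideal points: $\pi.\il=(\pi^{nw}.x_1,\pi.y_1)$ is coordinate-wise no worse than any node in $\subtree(\pi.l)$ once combined with $\pi$'s own position and Property~\ref{property_1}, so if even the ideal is dominated, so is the entire subtree. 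After these subtrees are pruned, \textsc{RemoveNode}$(\pi)$ and the recursive call \textsc{Insert}$(\pi^*,\pi)$ on the resulting smaller tree finish the job by the inductive hypothesis.

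The main obstacle is the auxiliary claim on preservation of dominance under intersection with $R_1(\pi)$ (or $R_4(\pi)$): one has to rule out the possibility that the only witness of dominance for a deep node $\overline{\pi}$ lies in a part of $\pi^*$ that is cut off at $\pi$. The trichotomy above ($s\in R_1(\pi)$, $s\in R_3(\pi)$, or $s\in R_2(\pi)$, the last being impossible under $s\dom q$ with $q\in R_1(\pi)$) resolves exactly this point, and its symmetric version for $R_4(\pi)$ proceeds identically.
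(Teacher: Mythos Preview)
Your inductive organisation differs from the paper's: the paper fixes a single dominated node $\pi$ and follows the dominating portion $\pi':=\pi^*\cap R_3(\pi)$ down the root-to-$\pi$ path, arguing at each ancestor $\hat\pi$ that either the whole subtree containing $\pi$ is removed or a piece of $\pi^*$ still containing $\pi'$ is forwarded to the correct child. Your structural induction on $\pi.\size$ handles all stored nodes uniformly, which is arguably cleaner, but it hinges on the auxiliary claim that a dominating witness $s\in\pi^*$ survives the restriction to $\pi^*\cap R_1(\pi)$.

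That claim is where your argument has a gap. You assert that the alternative $s\in R_3(\pi)$ implies $\pi^*\dom\pi$, forcing $\pi=\emptyset$ after line~5. This fails when $\pi$ is a segment: $s\in R_3(\pi)$ only means $s$ dominates \emph{some} point of $\pi$, so line~5 may well leave a nonempty remainder (take $\pi$ the segment from $(2,5)$ to $(5,2)$ and $s=(1,4)$). The correct argument is that in the branch ``$\pi\neq\emptyset$ after line~5'' you are already working with the reduced node $\pi\setminus cl(R_2(\pi^*))$, no point of which is dominated by any $s\in\pi^*$; hence $s\in R_3(\text{new }\pi)$ is impossible by construction, and $s\in R_1(\text{new }\pi)$ follows. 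You should also record that $R_1(\text{new }\pi)\supseteq R_1(\text{old }\pi)$ (shrinking $\pi$ relaxes the defining inequalities of $R_1$), so that Property~\ref{property_1} for the old $\pi$ still places $\overline\pi$ inside $R_1(\text{new }\pi)$. A smaller point: your measure $\pi.\size$ need not strictly decrease in the split case when $\pi.l=\emptyset$, since then $\pi_2.\size=1+(\pi.r).\size=\pi.\size$; either induct on the number of \emph{originally stored} nodes in the subtree, or unfold one extra recursion at $\pi_2$ (where no reduction or split can occur because $\pi_2\cap cl(R_2(\pi^*))=\emptyset$) to reach $(\pi.r).\size<\pi.\size$.
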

\begin{proof}
Obvious from steps 7 (if $S'=\emptyset$), 13 (if $S'=S''\cup S'''$), and 14 ($S'$ is a segment) of Algorithm \ref{insert}.

\end{proof}

\begin{proposition}
\label{prop2}
\textsc{Insert} adds node $\pi^*$, or a portion, to the tree if and only if it is not dominated by any node currently stored in the tree. 
\end{proposition}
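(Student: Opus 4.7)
The plan is to combine the recursive structure of \textsc{Insert} with a geometric fact about the regions $R_1$ and $R_4$: if $p \in R_1(\pi')$ and $q \in R_4(\pi')$ are distinct, then neither dominates the other, since the two regions lie on opposite sides of both the horizontal and vertical lines through $\pi'$ (with the lower/left-boundary-only convention adopted in Section~\ref{overview}). Combined with Property~\ref{property_1}, this will restrict the set of nodes that could possibly dominate, or be dominated by, a given portion of $\pi^*$ during insertion.

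For the ``only if'' direction, I would follow a surviving portion $\tilde\pi \subseteq \pi^*$ through the chain of recursive invocations that ends in line~4 of Algorithm~\ref{insert}, where $\tilde\pi$ is placed at an empty slot. At every node $\pi$ on this recursion path, line~5 has already removed $\tilde\pi \cap cl(R_2(\pi))$ from the surviving portion, so no ancestor of the insertion site dominates $\tilde\pi$. For a node $\hat\pi \in \Pi$ off the path, let $\pi'$ be the deepest common ancestor of $\hat\pi$ and the insertion site; then $\hat\pi$ and $\tilde\pi$ lie in opposite subtrees of $\pi'$, so by Property~\ref{property_1} they lie in $R_1(\pi')$ and $R_4(\pi')$ respectively, and the geometric fact above rules out $\hat\pi \dom \tilde\pi$.

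For the ``if'' direction, I would prove by induction on recursion depth the strengthened invariant: \emph{if \textsc{Insert}$(q,\pi)$ is called with $q \subseteq \pi^*$ not dominated by any node of $\Pi \setminus \subtree(\pi)$, then every sub-portion of $q$ not dominated by $\subtree(\pi)$ is inserted.} The base case is the initial call at $\pi_0$, where $\Pi \setminus \subtree(\pi_0) = \emptyset$ so the hypothesis is vacuous. In the inductive step, line~5 discards only $q \cap cl(R_2(\pi))$, which is exactly the part of $q$ dominated by $\pi$, leaving $q \cap R_1(\pi)$ and $q \cap R_4(\pi)$ to be passed to $\pi.l$ and $\pi.r$ respectively. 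The geometric fact together with Property~\ref{property_1} ensures that these residual pieces still satisfy the invariant's hypothesis at the next recursive level, so the induction goes through; when recursion reaches an empty slot, line~4 performs the actual insertion.

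The main obstacle will be justifying the invariant in the two branches where the tree itself is rewritten before recursion. When $\pi^* \dom \pi$, \textsc{RemoveNode} replaces $\pi$ by some $\tilde\pi \in \subtree(\pi)$ and \textsc{Insert} is restarted at $\pi$; I would argue that no portion of $q$ nondominated in the original tree can become newly dominated after this replacement, since $\tilde\pi$ already belonged to $\subtree(\pi)$ and hence was accounted for by the invariant. When $\pi$ is split as $\pi_1 \cup \pi_2$ in lines~12--15, I would verify that the reassignments $\pi_1.l \leftarrow \pi.l$ and $\pi_2.r \leftarrow \pi.r$ preserve Property~\ref{property_1}, so that the region-based geometric argument continues to apply to the subsequent recursive calls on $\pi^* \cap R_1(\pi)$ and $\pi^* \cap R_4(\pi)$.
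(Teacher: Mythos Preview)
Your approach is correct and more explicitly structured than the paper's. For the ``if'' direction the paper gives a one-line argument (a nondominated portion is always passed to some child, so by finiteness it eventually reaches an empty slot and is added), whereas you formalize this via an explicit invariant and induction on the recursion. For the ``only if'' direction the paper argues by contraposition: it fixes a dominating node $\pi\in\Pi$, sets $\pi'=\pi^*\cap cl(R_2(\pi))$, and follows $\pi'$ down the path from $\pi_0$ toward $\pi$, showing that at each ancestor $\hat\pi$ either $\pi'\subset R_2(\hat\pi)$ (and is discarded) or $\pi'$ is contained in the piece passed to the appropriate child, so that $\pi'$ is removed when the recursion finally reaches $\pi$. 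Your argument is the dual of this: you follow a surviving piece and use the $R_1$/$R_4$ incomparability observation together with Property~\ref{property_1} to rule out domination by off-path nodes. Making that geometric fact explicit buys you a cleaner treatment of the tree-rewriting branches (\textsc{RemoveNode} and the split in lines~12--15), which the paper's proof essentially elides. One small correction: it is lines~16--17 of Algorithm~\ref{insert}, not line~5, that discard the portion of $\pi^*$ lying in $R_2(\pi)$; line~5 clips $\pi$, not $\pi^*$.
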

\begin{proof}
(If) Let $S'$ be a portion of $\pi^*.S$ not dominated by any $\pi\in\Pi$; then $(S',\emptyset, \emptyset)$ is inserted as a leaf node after a finite number of recursive calls --- see lines 15 and 16.

(Only if) We show the contrapositive. Suppose there is a portion $\tilde S$ of $\pi^*.S$ that is dominated by $\pi\in\Pi$. 
If $\pi = \pi_0$, then \textsc{Insert}$(\pi^*, \pi)$ has the effect (line 5) of removing $\tilde S$ from the set of points to be inserted, while the two recursive calls at lines 15 and 16 are made for $\pi^*\cap R_{\textrm{left}}(\pi)$ and $\pi^*\cap R_{\textrm{right}}(\pi)$, both excluding $\tilde S$. If $\pi \neq \pi_0$ and $\pi \in \subtree(\pi_0.l)$ (resp.\mbox{} $\pi \in \subtree(\pi_0.r)$), the recursive call \textsc{Insert}$(\pi^* \cap R_{\textrm{left}}(\pi_0),\pi_0.l)$ (resp.\mbox{}  \textsc{Insert}$(\pi^* \cap R_{\textrm{right}}(\pi_0),\pi_0.r)$) guarantees that a call to \textsc{Insert}$(\pi^*, \pi)$ will be placed, i.e., $\tilde S$ will be eliminated by subsequent recursive calls.
%
\end{proof}


According to Propositions \ref{prop1} and \ref{prop2}, the end state of the BoT contains all and only nondominated solutions. 



\begin{proposition}
\label{prop_order}
\textsc{Insert} retains Property \ref{property_1}.
\end{proposition}
\begin{proof}
The result is trivial for the cases in lines 2 and 3 of Algorithm \ref{insert}. The same holds if $S'=\emptyset$ (line 6), as it is easy to prove that \textsc{RemoveNode} retains Property \ref{property_1}. If, for the node $\pi^*$ to be inserted, $\pi^*.S \subset \cup_{\pi \in \Pi} R_{\textrm{up}}(\pi)$, then $\pi^*$ is dominated and by Proposition \ref{prop2} it will not be inserted, thereby not modifying the BoT. By the same proposition, any segment $\pi^*.S$ or portion that neither dominates nor is dominated by any node will be added to the BoT so as to satisfy Property \ref{property_1}: \textsc{Insert} recursively runs lines 15 and 16 until said portion is added as a leaf.

Assume now that $\pi^*.S \cap R_{\textrm{dn}}(\pi) \neq \emptyset$ for at least one $\pi \in \Pi$.
Since $\pi^*$ is not dominated, two cases arise: (i) $\pi.S \setminus cl(R_{\textrm{up}}(\pi^*))$ is a single segment $S'$; (ii) $\pi.S \setminus cl(R_{\textrm{up}}(\pi^*))$ is the union of two disjoint segments $S''$ and $S'''$ with $S'' \preceq S'''$.

In case (i), the property holds after running line 14, as $\pi.S$ is replaced by its subset $S'$.  Case (ii) is dealt with on lines 10-13:  since $S'' \preceq S'''$, $\tilde{\pi}.S \preceq S''$ for all $\tilde{\pi} \in \subtree(\pi.l)$ and $S''' \preceq \hat{\pi}.S$ for all $\hat{\pi} \in \subtree(\pi.r)$. Thus, replacing $\pi.S$ with $S''$ ensures that Property 1 is maintained for $\pi.l$ and placing $\pi' := (S''',\emptyset,\pi.r)$ as the right child of $\pi$ ensures that the same holds for $\pi.r$. By construction $\pi\preceq\pi'$, and this concludes the proof.
%
%
\end{proof}

\subsection{Illustrative example} \label{example}
We use the points and segments in Figure \ref{gen_pts_sgmts} as input to a BoT and show a few of the nontrivial steps of developing it. Assume that these solutions are obtained from five separate slice problems and that the Pareto sets of these slice problems are: (i) the singleton (1,19), (ii) the piecewise linear curve connecting (1,17) and (9,13), (iii) the one connecting (6,16) and (11,4), (iv) the singleton (5,11), and (v) the piecewise linear curve connecting (8,7) and (17,2).
The points and segments which define these Pareto sets are inserted into the BoT in the order of (iii), (iv), (ii), (v), (i). Piecewise linear curves are inserted as individual line segments from left to right. 

The first call is on an empty tree, hence $\pi^* = [(6,16),(7,10)]$ becomes the tree. 
The next point is $\pi^* = [(7,10),(10,5)]$; note that $\pi^* \subset R_{\textrm{right}}(\pi_0)$ and should be inserted at $\pi_0.r$. Since $\pi_0.r=\emptyset$,  $\pi^*$ replaces $\pi_0.r$.
%
The insertion of  $[(10,5),(11,4)]$ is analogous.

\begin{figure}
\centering
\subfigure[Inserting $(5,11)$.]{
\begin{tikzpicture}
\node[anchor=south west,inner sep=0] (image) at (0,0) {\includegraphics[width=2.5cm]{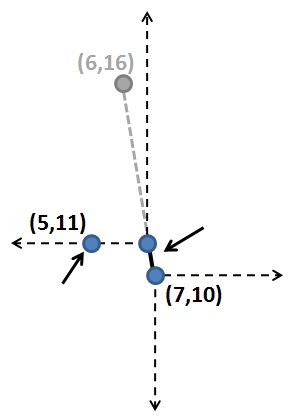}};
\node[yshift=-.15cm,xshift=0.75cm] at (image.center) {\scriptsize $\pi_0$};
\node[yshift=-.8cm,xshift=-.9cm] at (image.center) {\scriptsize $\pi^*$};
\node[yshift=0.5cm,xshift=.7cm] at (image.center) {\scriptsize $R_{\textrm{dn}}(\pi_0)$};
\node[yshift=.5cm,xshift=-.7cm] at (image.center) {\scriptsize $R_{\textrm{left}}(\pi_0)$};
\node[yshift=-1.45cm,xshift=-.6cm] at (image.center) {\scriptsize $R_{\textrm{up}}(\pi_0)$};
\node[yshift=-1.45cm,xshift=.7cm] at (image.center) {\scriptsize $R_{\textrm{right}}(\pi_0)$};
\end{tikzpicture}
\label{ex2_2}
}\hspace{2cm}
\subfigure[Inserting $(8,7),(14,3)$.]{
\begin{tikzpicture}
\node[anchor=south west,inner sep=0] (image) at (0,0) {\includegraphics[width=4cm]{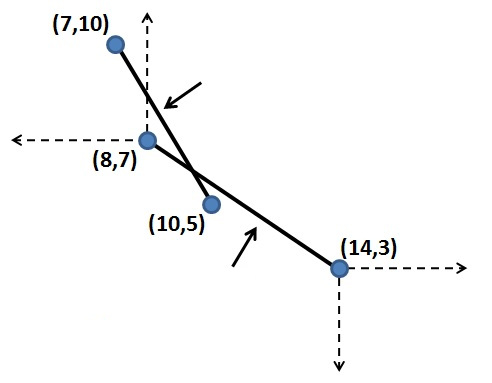}};
\node[yshift=.9cm,xshift=0cm] at (image.center) {\scriptsize $\pi_0.r$};
\node[yshift=-0.7cm,xshift=-.3cm] at (image.center) {\scriptsize $\pi^*$};
\node[yshift=0.2cm,xshift=1cm] at (image.center) {\scriptsize $R_{\textrm{dn}}(\pi^*)$};
\node[yshift=0.7cm,xshift=-1.4cm] at (image.center) {\scriptsize $R_{\textrm{left}}(\pi^*)$};
\node[yshift=-.9cm,xshift=-1.2cm] at (image.center) {\scriptsize $R_{\textrm{up}}(\pi^*)$};
\node[yshift=-.9cm,xshift=1.45cm] at (image.center) {\scriptsize $R_{\textrm{right}}(\pi^*)$};
\end{tikzpicture}
\label{ex4_1}
}
\caption{An example that shows the effect of the insertion of a segment on a set of non-dominated segments.}
\end{figure}

\begin{figure}
\centering
\subfigure[Tree after inserting (5,11).]{
\includegraphics[width=4cm]{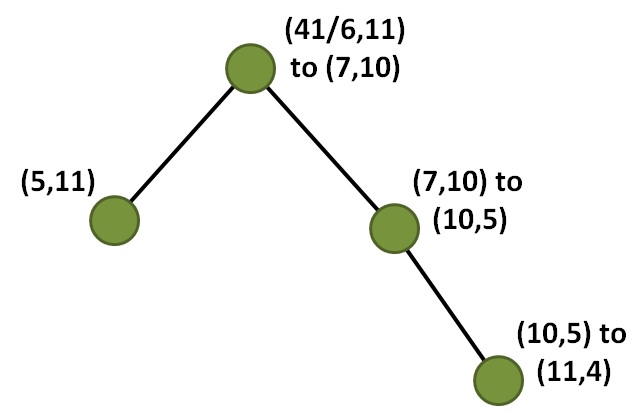} 
\label{ts1}
}\hspace{.5cm}
\subfigure[Tree after rebalancing.]{
\includegraphics[width=4cm]{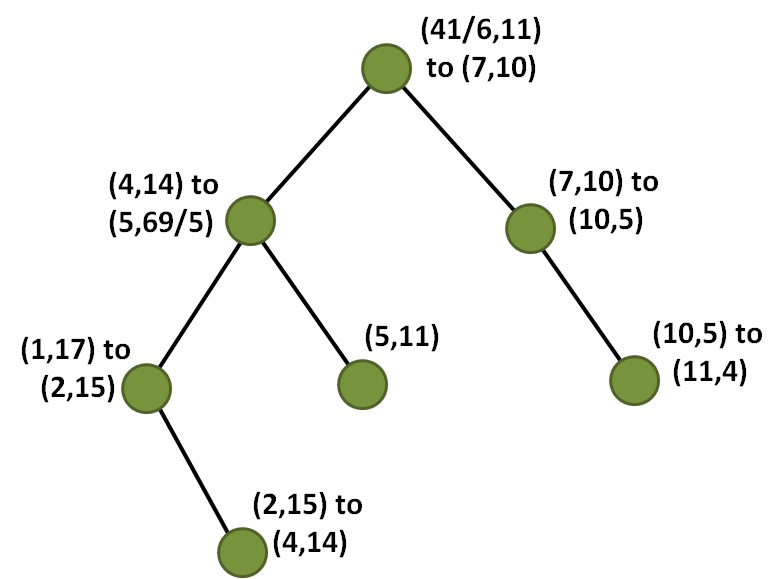} 
\label{ts3}
}
\subfigure[Final tree.]{
\includegraphics[width=4cm]{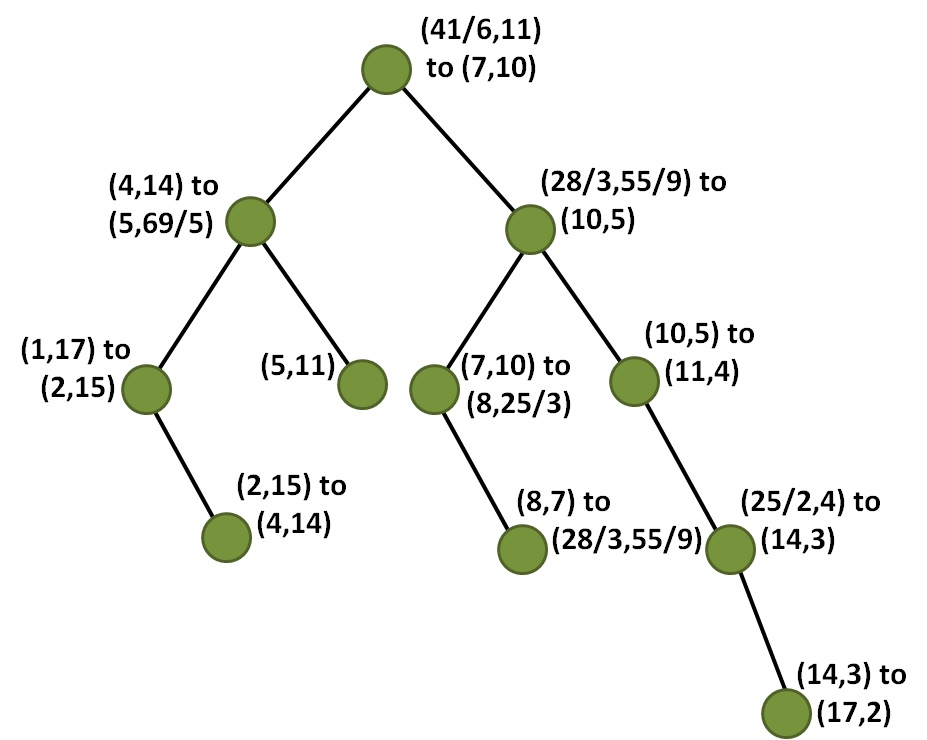} 
\label{ts6}
}
\caption{Rebalancing a tree upon insertion of point $(5,11)$.}
\label{ex_2}
\end{figure}

Next consider Pareto set (iv). Let $\pi^* \leftarrow (5,11)$ and call \textsc{Insert}$(\pi^*,\pi_0)$. Observe Figure \ref{ex2_2}. Because $\pi^*$ partially dominates $\pi_0$, we remove the dominated portion of $\pi_0$ by letting $S' = \pi_0.S \setminus cl(R_{\textrm{up}}(\pi^*))$. As a consequence, $\pi^* \subset R_{\textrm{left}}(S')$, and, since $\pi_0.l=\emptyset$, $\pi^*$ becomes the left child of $\pi_0$. Figure \ref{ts1} shows the BoT after $\pi^*$ has been inserted.
We leave it to the reader to consider Pareto set (ii). Rebalancing the subtree rooted at $\pi_0.l$ after processing this set yields the BoT shown in Figure \ref{ts3}. 

Consider now the insertion of Pareto set (v). Let $\pi^* \leftarrow [(8,7),(14,3)]$ and call \textsc{Insert}$(\pi^*,\pi_0)$. Clearly $\pi^* \subset R_{\textrm{right}}(\pi_0)$ and is hence inserted to $\pi_0.r$. Observe from Figure \ref{ex4_1}
that $\pi^*$ partially dominates $\pi_0.r$. This time, though, the portion of $\pi_0.r$ which is dominated is the center section of the segment. This means that $\pi_0.r$ must be split into two nodes $\pi_1$ and $\pi_2$. Node $\pi_1$ takes the place in the tree where $\pi_0.r$ originally was, and the left subtree of $\pi_0.r$ becomes the left subtree of $\pi_1$. Node $\pi_2$ becomes the right child of $\pi_1$ and the right subtree of $\pi_0.r$ becomes the right subtree of $\pi_2$. Subsequently, $\pi^* \subset R_{\textrm{right}}(\pi_1)$ and thus $\pi^*$ is inserted to $\pi_2$ (which is now $\pi_0.r.r$). Notice that $\pi_0.r.r$ partially dominates $\pi^*$ and that it is the center portion of $\pi^*$ that is dominated. Thus the calls to \textsc{Insert}$(\pi^*\cap R_{\textrm{left}}(\pi_0.r.r),\pi_0.r.r.l)$ and \textsc{Insert}$(\pi^*\cap R_{\textrm{right}}(\pi_0.r.r),\pi_0.r.r.r)$ each causes a portion of $\pi^*$ to be inserted at $\pi_0.r.r.l$ and $\pi_0.r.r.r$ respectively. Since $\pi_0.r.r.l = \emptyset$, $\pi^*\cap R_{\textrm{right}}(\pi_0.r.r)$ becomes $\pi.r.r.l$. Since $\pi_0.r.r.r$ is the segment (10,5) to (11,4), it is clear that another portion of $\pi^*$ needs to be removed, and then the remainder of $\pi^*$ becomes $\pi_0.r.r.r.r$. 

The remaining insertions are analogous to those that we have described. After yet another rebalance, the final BoT is as in Figure \ref{ts6}.

\section{Computational Experiments} \label{results}

We implemented the BoT in the C programming language and performed two tests. The first test addresses the efficiency with which a large number of randomly generated solutions can be stored in a BoT, using different rebalancing techniques. The second test addresses the utility a BoT when used in the two BB algorithms for BOMILP by \citet{belotti2012biobjective} and by \citet{adelgren2016}. All tests were run on Clemson University's Palmetto Cluster. Specifically, an HP SL250s server node with a single Intel E5-2665 CPU core with 16GB of RAM running Scientific Linux 6.4 was used.

In all of these experiments we compare the performance of the BoT with that of a dynamic linked list (L). Like a BoT, the linked list takes points and segments in $\R^2$ as input and stores only the nondominated subset of all input. All segments $S=(x_1,x_2,y_1,y_2)$ of the linked list are stored in increasing order of $x_1$, so that parsing all elements of the list produces the same output as an {\em in-order} visit of a BoT. Inserting a point or segment $S'$ consists in comparing it with every stored point or segment in the list, until a segment $S$ is encountered such that $S'\subset R_{\textrm{left}}(S)$. During each comparison, dominated solutions are discarded.  Although only a few, if any, elements of the list might be changed with the insertion of $S'$, because we do not know where these  elements are located and do not have more sophisticated search mechanisms on such a list, insertion has an {\em average} complexity of $O(t)$. Such lists have been used for storing nondominated solutions in both the pure integer \citep{sun1996} and mixed-integer cases \citep{mavrotas2005multi,vincent2013biobjective}.

Maintaining a balanced tree is one of the most costly operations, as shown in the tests below. As its only purpose is efficiency, it does not need to be applied at every step. Hence, we decided to further consider the rebalancing operations and use an alternative strategy that is less computationally costly, but still keeps the BoT fairly balanced.

We use the strategy of \citet{overmars1982}: for each non-leaf node $\pi$, the subtrees of $\pi.l$ and $\pi.r$  must contain no more than $\dfrac{1}{2-\delta}\textit{size}(\pi)$ nodes, where $\delta$ is a pre-selected value in the open interval $(0,1)$. This causes the depth of the tree to be at most $\log_{2-\delta} t$ where $t$ is the number of nodes in the tree. 

\citet{overmars1982} also suggest rebalancing by traversing the path travelled by an inserted solution in the reverse order and checking whether or not the balance criterion is satisfied at each of these nodes. This saves one from having to check the balance criterion at every node in the tree since the only places where it could have been altered are at nodes along this path. In our case, though, when a line segment is inserted into a BoT, it often does not remain intact, but may be separated into many smaller segments, each traversing its own path through the tree before finally being added. For this reason, rearranging the tree after insertion is troublesome, and we experimented with a few alternative approaches:
\begin{enumerate}
\item[\textbf{A0} -] No rebalancing is used.
\item[\textbf{A1} -] Before inserting a point or segment at the root node, check the balance criterion at every node in the tree and rebalance where necessary. This approach guarantees that the balance of the tree is maintained, though at a high computational cost.
\item[\textbf{A2} -] Check the balance criterion after the $k$-th insertion (we used $k=100$ in our tests), then check the balance criterion when the tree size increased by $101\%$ w.r.t.\mbox{} the size at the previous check. This approach significantly decreases the complexity of rebalancing, but eliminates the balance guarantee.
\item[\textbf{A3} -] Check the balance criterion at any node that is currently being inserted at. This approach has a much lower complexity than A1, and would cause balance to be maintained at the root node, and along any frequently travelled paths in the tree. However, again the guarantee of balance is lost.
\item[\textbf{A4} -] Combine approaches A2 and A3: check the balance criterion of the entire tree after the $k$-th insertion ($k = 100$), then check the balance criterion again when the tree size increases by 800\% w.r.t. the size at the previous check. In between these checks of the entire tree, check the balance criterion for any node that is being inserted at.

Approach A4 allows for maintaining a fairly well balanced tree by applying approach A2 much more infrequently than if using approach A2 alone. Clearly this has a higher complexity than approach A3, but it may be less than that of approach A2 and allow for a more balanced tree.
\end{enumerate} 

We implemented each of these approaches in our first experiment, described in Section \ref{sec2.2.1}.
We utilize approach A2 when performing our second experiment, which is described in Section \ref{tests-bb} because for most of our tests A2 performed comparably to A0 in terms of CPU time, but always maintained a more balanced tree.

\subsection{Insertion of large number of random points}
\label{sec2.2.1}

This first test has two main purposes: (i) to compare the efficiency of a BoT with that of a dynamic list when storing nondominated solutions, and (ii) to determine the best rebalancing approach w.r.t.\mbox{} tree depth and time.

The test consists of repeating the following procedure until $N$ insertions have been made into a BoT or the dynamic list. First, generate a random integer $i \in [1,6]$ and a random number $r_1 \in (0,10)$. Then, if $i>1$, for each $j\in\{2,\dots,i\}$ a random number $c_j\in (0,1)$ is generated and we define $r_j = r_1 + \su[\ell=2]{j}c_\ell$. Next, for each $j\in\{1,\dots,i\}$ the following are computed: (i) $y_j = \dfrac{(10.5-r_j)^2}{5} -k$, and (ii) $x_j = r_j + (5-k)$.
Here $k$ is a dynamic value which is defined as 1 at the start of the test and increases by $\dfrac{\mu}{N}$ each time the above process is repeated, and $\mu \in \R$ determines how much the solutions should ``improve'' over the course of the test. If $i=1$, the singleton $(x_1,y_1)$ is inserted into the structure, otherwise the points $(x_1,y_1),...,(x_i,y_i)$ are arranged in order of increasing $x$ values and then the line segments connecting each adjacent pair of points are inserted into the structure. We performed this test 100 times for each combination of the values $N\in \{10^4,10^5,10^6,10^7\}$ and $\mu\in\{0,0.001,0.01,0.1,1,10\}$. We used various values for $\delta$ and found that the results were quite similar, but determined to use a value of $\delta = 0.3$. For each test we recorded the total insertion time for both algorithms, the final depth of a BoT, and the final number of nodes stored in the BoT and in the list. 

With $\mu$ close to zero, there is little or no separation between early generated solutions and later generated ones, and all are likely to be Pareto. With large values of $\mu$, there is significant separation between early generated solutions and later generated ones, the latter being much more likely to be Pareto. 
Figure \ref{ex_data_ex1} shows an example of solutions generated during this experiment for $\mu=0.1,1$ and $10$ and for $N=100$. The red solutions are those that are in the BoT at the end of the test. 

\begin{figure}
\subfigure[$\mu=0.1$]{\includegraphics[width=5cm]{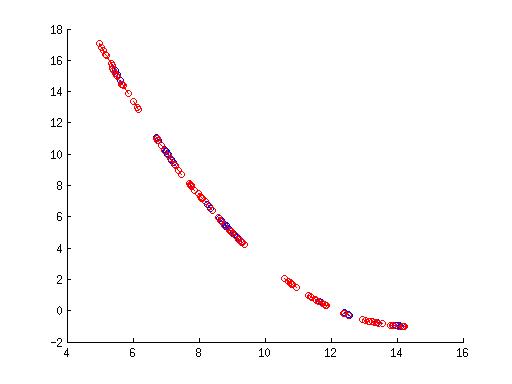}}
\subfigure[$\mu=1$]{\includegraphics[width=5cm]{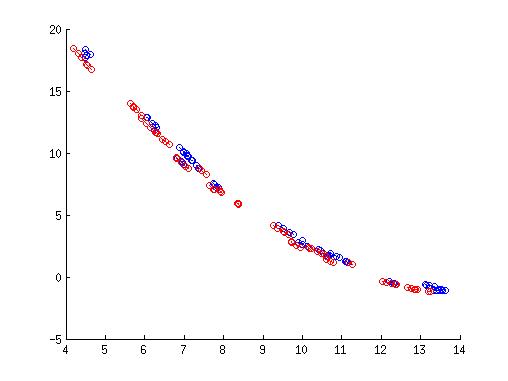}}
\subfigure[$\mu=10$]{\includegraphics[width=5cm]{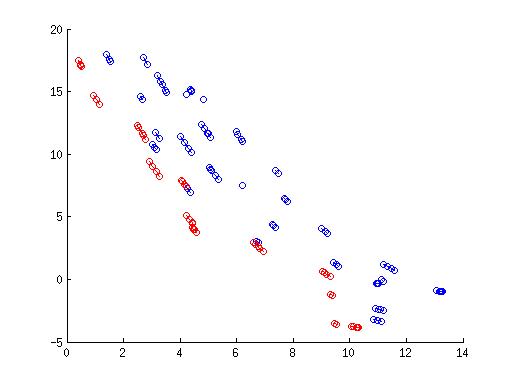}}
\caption{Example of randomly generated solutions; $N=100$.}\label{ex_data_ex1}
\end{figure}





The minimum, maximum, and geometric means of the CPU times and final depths of the tree can be found in Tables \ref{table_results_a}-\ref{table_results_c}. 
The symbols A0 -- A4 and L indicate runs in which the various rebalancing approaches of a BoT and the dynamic list were used for storing solutions. Also, entries in Tables 
\ref{table_results_a}-\ref{table_results_c}
which contain dashes are those for which no results are available due to the fact that individual runs took over 12 hours to complete and were therefore terminated. The symbols $\circledast$ and $\dagger$ indicates results for which, due to the large amount of time taken for each individual run, we were unable to perform the test 100 times. For these results, each test was instead run 5 times (for $\circledast$) and 3 (for $\dagger$).

We use performance profiles \citep{dolan2002benchmarking} to show the relative effectiveness of the various rebalancing approaches, in terms of CPU time and tree depth, in Figure \ref{exp1_profs}. We omit the results of the list implementation and rebalancing approach A1 as they performed poorly in terms of CPU time when compared to the other approaches, and only show profiles for $N=10^7$ and for a few values of $\mu$ for reasons of space. For $\mu=0$ (Figure \ref{perf_prof1}), A2, shown by the curve with $+$ symbols, has the best time performance although it does not dominate A0 (the solid line), which in turn is at least 30\% worse than A2 in half of the instances but never worse than 40\%. Algorithms A3 and A4 perform similarly and are dominated by A0 and A2. For $\mu=0.001$, instead (see Figure \ref{perf_prof2}), A0 dominates A2, which is, nevertheless, never more than 15\% worse than the best performance. A3 and A4 (lines with $*$ and $\circ$, respectively) perform worse still, and this is confirmed for $\mu=1$ (see Figure \ref{perf_prof3}), where A4 fares slightly better than A3. This is confirmed for $\mu=10$ (not shown).

While A2 and A0 perform well in general, A0 carries the risk of an unbalanced tree, as reported in Figure \ref{perf_prof4}, where tree depth is shown to be up to 3.5 times worse than the best. A3 and A4 have the best performance in terms of maximum tree depth, which however reflects in poor time performance as shown in Figure \ref{perf_prof2}. 

\begin{figure}
\subfigure[\small Time, $\mu = 0$]{
\includegraphics[width=.45\linewidth]{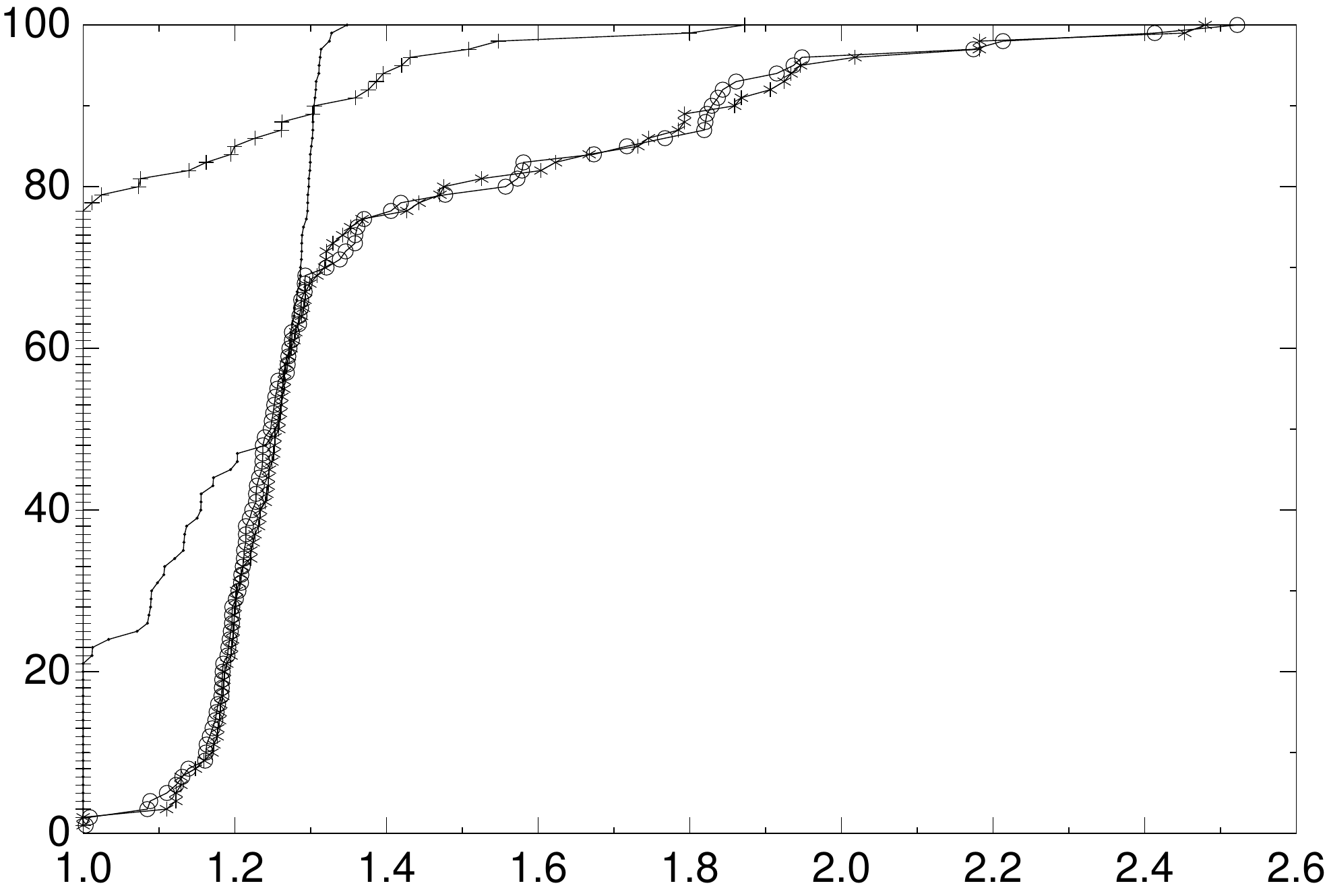}
\label{perf_prof1}
}\hfill
\subfigure[\small Time, $\mu=0.001$]{
\includegraphics[width=.45\linewidth]{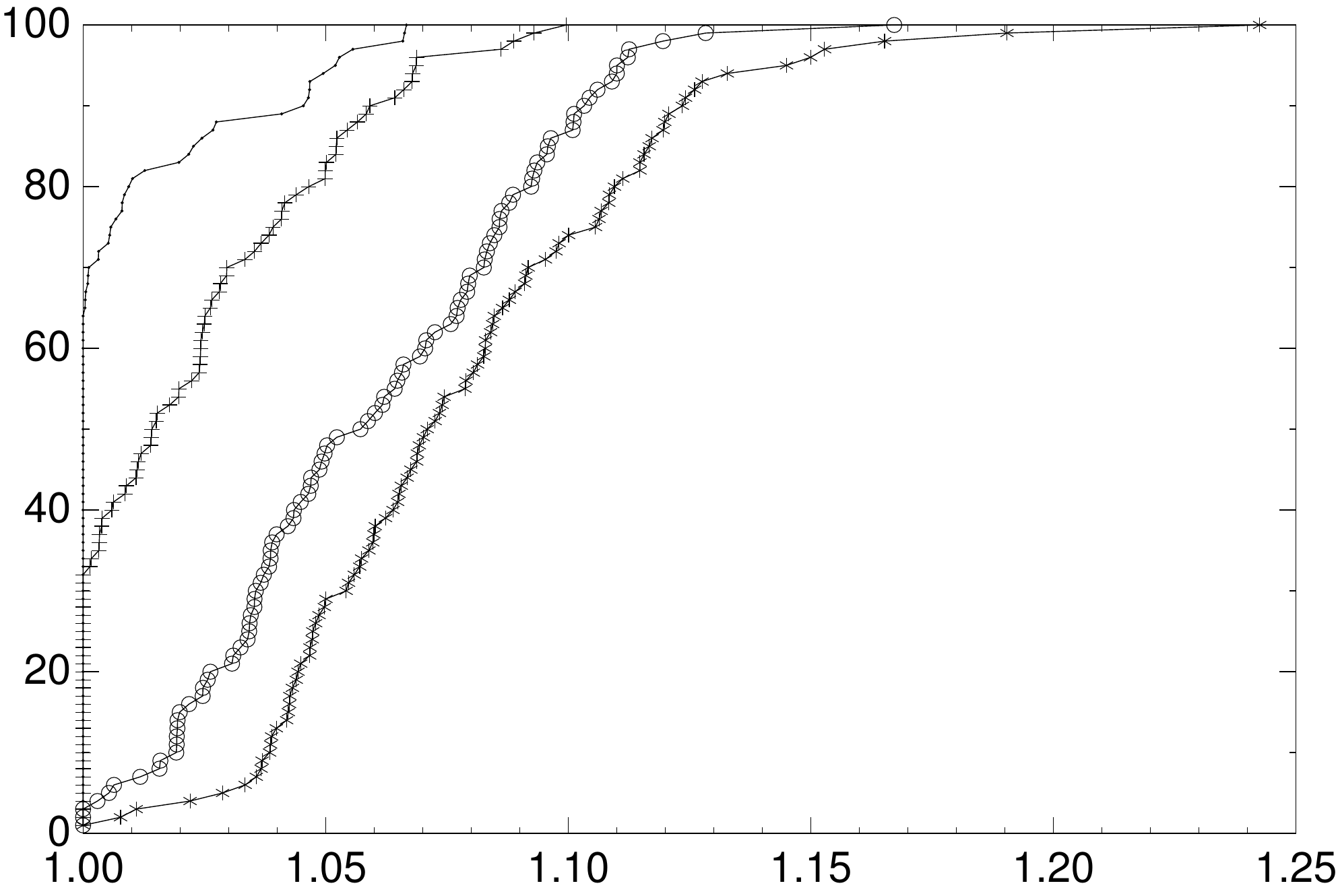}
\label{perf_prof2}
}
\subfigure[\small Time, $\mu=1$]{
\includegraphics[width=.45\linewidth]{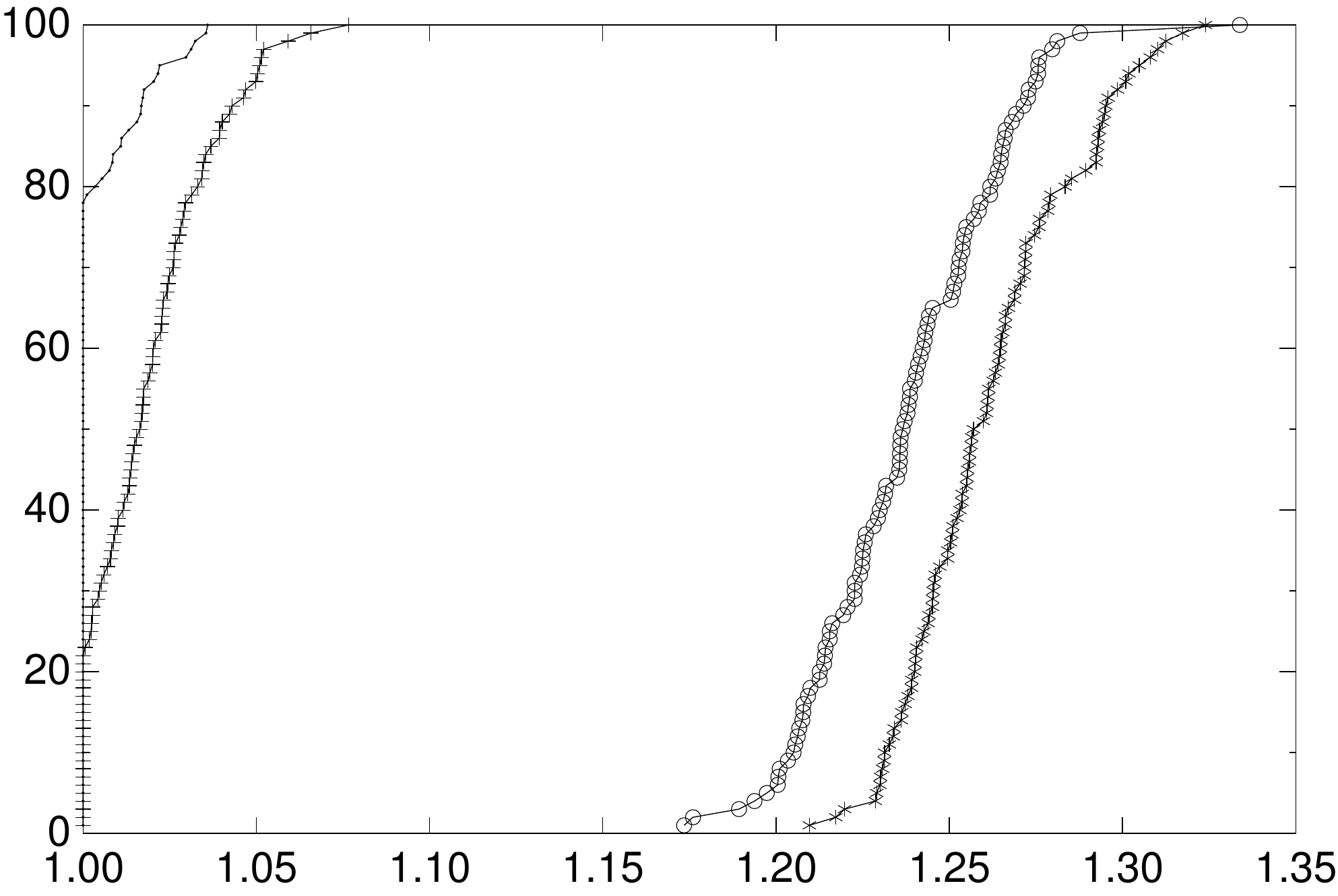}
\label{perf_prof3}
}\hfill
\subfigure[\small Depth,  $\mu = 0.001$]{
\includegraphics[width=.45\linewidth]{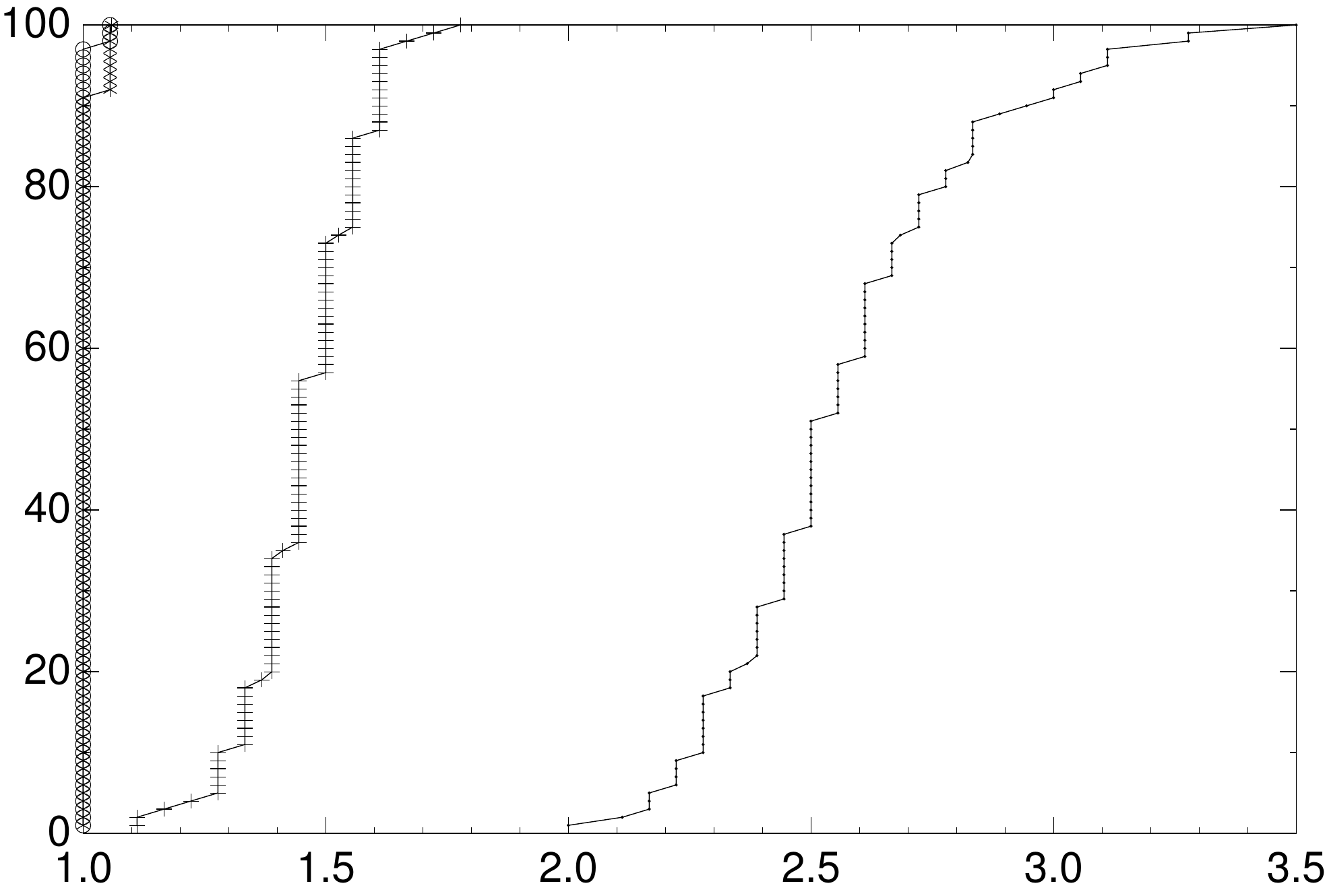}
\label{perf_prof4}
}
\caption{Performance profiles of the rebalancing algorithms A0 (solid line), A2 ($+$), A3 ($*$), and A4 ($\circ$). All profiles are w.r.t.\mbox{} results for $N=10^7$. For each algorithm $A$, a point $(x,y)$ on $A$'s curve indicates that for $y$\% of the instances $A$'s performance is at most $x$ times worse than the best algorithm.}\label{exp1_profs}
\end{figure}

\begin{table}[t]
\begin{center}
\scriptsize
\caption{
Time and depth of the tree for the random points test ($\mu = 0, 0.001$)}
\label{table_results_a}
\begin{tabular}{rl|rrr|rrr||rrr|rrr}
\toprule
& Rebal & 
\multicolumn{3}{c|}{Time (s)} & \multicolumn{3}{c||}{Depth}&
\multicolumn{3}{c|}{Time (s)} & \multicolumn{3}{c}{Depth} \\ 
$N$ & Type & 
Min	& Avg	& Max	& Min	& Avg	& Max & 
Min	& Avg	& Max	& Min	& Avg	& Max \\
\hline 
&&\multicolumn{6}{c||}{$\mu = 0$} &
\multicolumn{6}{c}{$\mu = 0.001$}\\
\hline 
	$10^4$      &	A0	&	0.08	&	0.09	&	0.10    &	34	&	42.6	&	53	&	0.03	&	0.03	&	0.04	&	25	&	31.6	&	41	\\
		    &	A1	&	2.22	&	2.31	&	2.55	&	16	&	16.2	&	18	&	0.28	&	0.30	&	0.36	&	12	&	12.0	&	13	\\
		    &	A2	&	0.16	&	0.20	&	0.35	&	16	&	17.1	&	18	&	0.05	&	0.05	&	0.08	&	12	&	13.6	&	16	\\
		    &	A3	&	0.11	&	0.16	&	0.38	&	16	&	17.0	&	18	&	0.04	&	0.05	&	0.1	&	12	&	13.1	&	14	\\
		    &	A4	&	0.11	&	0.16	&	0.39	&	16	&	17.0	&	18	&	0.04	&	0.05	&	0.09	&	12	&	13.1	&	14	\\
		    &	L	&	5.65	&	5.94	&	6.27	&	--	&	--	&	--	&	0.35	&	0.37	&	0.41	&	--	&	--	&	--	\\
\hline
	$10^5$      &	A0	&	4.52	&	4.85	&	4.94	&	42	&	51.3	&	62	&	0.64	&	0.70	&	0.79	&	27	&	36.7	&	52	\\
		    &	A1	&	448.54	&	476.75	&	520.94	&	19	&	19.2	&	20	&	9.09	&	9.57	&	11.02	&	13	&	14.0	&	15	\\
		    &	A2	&	6.31	&	 9.91	&	32.54	&	19	&	20.2	&	22	&	0.80	&	0.91	&	1.12	&	14	&	17.0	&	19	\\
		    &	A3	&	5.83	&	 9.81	&	41.62	&	19	&	19.8	&	21	&	0.84	&	0.99	&	1.30	&	14	&	14.9	&	16	\\
		    &	A4	&	5.69	&	 9.77	&	43.83	&	19	&	19.7    &	21	&	0.81	&	0.99	&	1.31	&	14	&	14.9	&	16	\\
		    &	L	&	668.19	&	694.63	&	747.21	&	--	&	--	&	--	&	19.30	&	19.65	&	20.73	&	--	&	--	&	--	\\
\hline
	$10^6$      &	A0	&	126.01	&	128.91	&	135.84	&	45	&	54.6	&	70	&	17.32	&	18.39	&	18.79	&	34	&	42.2	&	54	\\
		    &	A1	&	--	&	--	&	--	&	--	&	--	&	--	&	427.00	&	434.40	&	465.09	&	17	&	17	&	17	\\
		    &	A2	&	124.02	&	143.02	&	254.55	&	21	&	22.4	&	24	&	18.93	&	19.82	&	21.71	&	17	&	21.0	&	24	\\
		    &	A3	&	137.08	&	168.69	&	328.82	&	21	&	21.0	&	22	&	20.12	&	21.35	&	24.55	&	16	&	16.3	&	17	\\
		    &	A4	&	136.23	&	165.19	&	306.11	&	20	&	21.0	&	22	&	19.85	&	21.07	&	24.2	&	16	&	16.44	&	17	\\
		    &	L	&	--	&	--	&	--	&	--	&	--	&	--	&	--	&	--	&	--	&	--	&	--	&	--	\\
\hline
	$10^7$      &	A0	&	1684.59	&	2122.15	&	2809.04	&	50	&	57.9	&	68	&	431.54	&	459.64	&	485.31	&	36	&	46.3	&	63	\\
		    &	A1	&	--	&	--	&	--	&	--	&	--	&	--	&$\circledast$39654.42&	39922.79&	40136.91&	17	&	17.0	&	17	\\
		    &	A2	&	1287.45	&	1915.42	&	4908.04	&	22	&	24.4	&	26	&	423.87	&	466.23	&	504.79	&	20	&	26.2	&	32	\\
		    &	A3	&	1599.44	&	2422.43	&	6690.78	&	21	&	21.3	&	23	&	468.92	&	491.24	&	536.21	&	18	&	18.1	&	19	\\
		    &	A4	&	1611.95	&	2413.61	&	6880.00	&	21	&	21.3	&	23	&	456.67	&	482.73	&	522.69	&	17	&	18.0	&	19	\\
		    &	L	&	--	&	--	&	--	&	--	&	--	&	--	&$\dagger$63936.51&	67937.70&	70214.80&	--	&	--	&	--	\\
\bottomrule
\end{tabular}
\end{center}
\end{table}

\begin{table}[t]
\begin{center}
\scriptsize
\caption{
Time and depth of the tree for the random points test ($\mu = 0.01, 0.1$)}
\label{table_results_b}
\begin{tabular}{rl|rrr|rrr||rrr|rrr}
\toprule
& Rebal & 
\multicolumn{3}{c|}{Time (s)} & \multicolumn{3}{c||}{Depth}&
\multicolumn{3}{c|}{Time (s)} & \multicolumn{3}{c}{Depth} \\ 
$N$ & Type & 
Min	& Avg	& Max	& Min	& Avg	& Max & 
Min	& Avg	& Max	& Min	& Avg	& Max \\
\hline 
&&\multicolumn{6}{c||}{$\mu = 0.01$} &
\multicolumn{6}{c}{$\mu = 0.1$}\\
\hline 
	$10^4$	&	A0	&	0.02	&	0.02	&	0.02	&	14	&	21.6	&	36	&	0.01	&	0.01	&	0.02	&	11	&	12.8	&	16	\\
		&	A1	&	0.11	&	0.11	&	0.13	&	10	&	10.2	&	11	&	0.05	&	0.05	&	0.05	&	8	&	8.7	&	10	\\
		&	A2	&	0.03	&	0.03	&	0.03	&	10	&	12.2	&	16	&	0.02	&	0.02	&	0.02	&	8	&	10.5	&	14	\\
		&	A3	&	0.03	&	0.03	&	0.03	&	10	&	11.1	&	12	&	0.02	&	0.02	&	0.02	&	9	&	9.4  	&	11	\\
		&	A4	&	0.03	&	0.03	&	0.03	&	10	&	11.1	&	12	&	0.02	&	0.02	&	0.02	&	9	&	9.4  	&	11	\\
		&	L	&	0.11	&	0.12	&	0.14	&	--	&	--	    &	--	&	0.04	&	0.04	&	0.06	&	--	&	--	&	--	\\
\hline
	$10^5$	&	A0	&	0.34	&	0.36	&	0.39	&	18	&	25.4	&	47	&	0.23	&	0.23	&	0.24	&	13	&	15.1	&	21	\\
		&	A1	&	3.14	&	3.22	&	3.35	&	12	&	12.1	&	13	&	1.18	&	1.20	&	1.30	&	10	&	10.1	&	12	\\
		&	A2	&	0.41	&	0.43	&	0.47	&	13	&	15.9	&	20	&	0.26	&	0.26	&	0.27	&	10	&	14.2	&	18	\\
		&	A3	&	0.47	&	0.50	&	0.55	&	12	&	13.0	&	14	&	0.33	&	0.34	&	0.36	&	10	&	11.1	&	12	\\
		&	A4	&	0.48	&	0.50	&	0.56	&	12	&	13.0	&	14	&	0.34	&	0.34	&	0.37	&	10	&	11.1	&	12	\\
		&	L	&	4.12	&	4.27	&	4.47	&	--	&	--	    &	--	&	1.25	&	1.31	&	1.37	&	--	&	--	&	--	\\
\hline
	$10^6$	&	A0	&	6.61	&	7.07	&	7.88	&	21	&	29.6	&	43	&	3.50	&	3.58	&	3.76	&	16	&	18.0	&	21	\\
		&	A1	&	91.56	&	96.47	&	105.81	&	13	&	13.9	&	15	&	32.01	&	32.41	&	32.85	&	12	&	12.1	&	13	\\
		&	A2	&	6.92	&	7.37	&	7.84	&	16	&	19.8	&	25	&	3.69	&	3.78	&	3.92	&	14	&	17.7	&	20	\\
		&	A3	&	8.31	&	8.83	&	9.56	&	14	&	14.9	&	16	&	4.89	&	5.03	&	5.25	&	12	&	13.0	&	14	\\
		&	A4	&	8.24	&	8.72	&	9.14	&	14	&	14.9	&	16	&	4.90	&	5.04	&	5.21	&	12	&	13	&	14	\\
		&	L	&	211.81	&	215.19	&	224.98	&	--	&	--	    &	--	&	--	&	--    	&	--   	&	--	&	--	&	--	\\
\hline
	$10^7$	&	A0	&	94.29	&	184.30	&	188.04	&	23	&	33.2	&	48	&	68.06	&	70.71	&	74.91	&	19	&	20.9	&	24	\\
		&	A1	&	1874.98	&	4396.94	&	6018.77	&	15	&	15.1	&	16	&	948.30	&	1027.48	&	1132.85	&	13	&	13.9	&	15	\\
		&	A2	&	93.15	&	187.70	&	191.03	&	18	&	23.5	&	29	&	69.08	&	71.76	&	74.85	&	16	&	20.7	&	23	\\
		&	A3	&	108.64	&	206.37	&	210.70	&	16	&	16.3	&	17	&	85.63	&	88.87	&	93.43	&	14	&	14.8	&	15	\\
		&	A4	&	109.23	&	200.11	&	207.89	&	16	&	16.3	&	17	&	84.15	&	87.15	&	92.13	&	14	&	14.8	&	15	\\
		&	L	&	2786.57	&	14687.80 &	15234.72&	--	&	--	&	--	&	2136.03	&	2210.22	&	2452.43	&	--	&	--	&	--	\\
\bottomrule
\end{tabular}
\end{center}
\end{table}

\begin{table}[t]
\begin{center}
\scriptsize
\caption{
Time and depth of the tree for the random points test ($\mu = 1, 10$)}
\label{table_results_c}
\begin{tabular}{rl|rrr|rrr||rrr|rrr}
\toprule
& Rebal & 
\multicolumn{3}{c|}{Time (s)} & \multicolumn{3}{c||}{Depth}&
\multicolumn{3}{c|}{Time (s)} & \multicolumn{3}{c}{Depth} \\ 
$N$ & Type & 
Min	& Avg	& Max	& Min	& Avg	& Max & 
Min	& Avg	& Max	& Min	& Avg	& Max \\
\hline 
&&\multicolumn{6}{c||}{$\mu = 1$} &
\multicolumn{6}{c}{$\mu = 10$}\\
\hline 
	$10^4$	    &	A0	&	0.01	&	0.01	&	0.01	&	9	&	11.2	&	16	&	0.01	&	0.01	&	0.01	&	9	&	10.9	&	17	\\
		    &	A1	&	0.03	&	0.04	&	0.04	&	7	&	8.0	&	9	&	0.03	&	0.03	&	0.03	&	7	&	7.4	&	9	\\
		    &	A2	&	0.01	&	0.01	&	0.02	&	8	&	9.8	&	13	&	0.01	&	0.01	&	0.01	&	7	&	8.9	&	12	\\
		    &	A3	&	0.01	&	0.02	&	0.02	&	8	&	8.6	&	10	&	0.01	&	0.01	&	0.01	&	7	&	8.0	&	9	\\
		    &	A4	&	0.01	&	0.02	&	0.02	&	8	&	8.6	&	10	&	0.01	&	0.01	&	0.01	&	7	&	8.0 	&	9	\\
		    &	L	&	0.03	&	0.03	&	0.04	&	--	&	--	&	--	&	0.03	&	0.03	&	0.04	&	--	&	--	&	--	\\
\hline
	$10^5$	    &	A0	&	0.19	&	0.19	&	0.19	&	11	&	12.6	&	17	&	0.16	&	0.16	&	0.17	&	12	&	16.7	&	24	\\
		    &	A1	&	0.55	&	0.56	&	0.61	&	8	&	8.8   	&	10	&	1.31	&	1.40	&	1.49	&	8	&	8.53	&	10	\\
		    &	A2	&	0.20    &	0.20	&	0.21	&	9	&	12.3	&	17	&	0.23	&	0.23	&	0.25	&	9	&	11.1	&	15	\\
		    &	A3	&	0.25	&	0.25	&	0.26	&	9	&	9.3	&	10	&	0.28	&	0.29	&	0.31	&	8	&	9.0  	&	10	\\
		    &	A4	&	0.25	&	0.25	&	0.26	&	9	&	9.3	&	10	&	0.28	&	0.29	&	0.31	&	8	&	9.0 	&	10	\\
		    &	L	&	0.50    &	0.52	&	0.57	&	--	&	--	&	--	&	2.51	&	2.67	&	2.91	&	--	&	--	&	--	\\
\hline
	$10^6$	    &	A0	&	2.39	&	2.42	&	2.47	&	14	&	15.3	&	19	&	1.92	&	1.94	&	1.96	&	11	&	12.6	&	20	\\
		    &	A1	&	12.15	&	12.44	&	12.85	&	10	&	10.2	&	11	&	5.62	&	5.72	&	5.89	&	8	&	8.7 	&	10	\\
		    &	A2	&	2.47	&	2.50	&	2.57	&	13	&	15.2	&	19	&	1.96	&	1.98	&	2.01	&	10	&	12.5	&	20	\\
		    &	A3	&	3.48	&	3.54	&	3.64	&	10	&	11.1	&	12	&	2.53	&	2.56	&	2.61	&	9	&	9.41	&	11	\\
		    &	A4	&	3.48	&	3.55	&	3.63	&	10	&	11.1	&	12	&	2.54	&	2.57	&	2.62	&	9	&	9.41	&	11	\\
		    &	L	&	13.18	&	13.44	&	13.87	&	--	&	--	&	--	&	5.20	&	5.38	&	5.54	&	--	&	--	&	--	\\
\hline
	$10^7$	    &	A0	&	34.49	&	34.97	&	35.78	&	16	&	18.0	&	21	&	23.28	&	23.97	&	24.25	&	13	&	15.23	&	19	\\
		    &	A1	&	317.90	&	321.25	&	329.82	&	12	&	12.1	&	13	&	121.22	&	124.55	&	126.14	&	10	&	10.1	&	12	\\
		    &	A2	&	35.15	&	36.10	&	37.20	&	16	&	18.0	&	21	&	24.24	&	24.45	&	24.87	&	13	&	15.2	&	19	\\
		    &	A3	&	48.55	&	49.79	&	52.08	&	12	&	13.0	&	14	&	34.57	&	35.44	&	36.21	&	10	&	11.2	&	13	\\
		    &	A4	&	48.54	&	49.84	&	51.46	&	12	&	13.0	&	14	&	35.22	&	35.56	&	36.78	&	10	&	11.2	&	13	\\
		    &	L	&	419.88	&	435.56	&	458.88	&	--	&	--	&	--	&	131.85	&	134.998	&	137.2	&	--	&	--	&	--	\\
\bottomrule
\end{tabular}
\end{center}
\end{table}

As shown in Tables \ref{table_results_a}--\ref{table_results_c}, with approaches A0, A2, A3, and A4, the BoT is able to process inserted solutions much more quickly than the dynamic list; A1 is more efficient than the list, but far slower than the other approaches. 
The performance difference grows with $N$.

Also, for most values of $N$ and $\mu$, A0 typically performs the best in terms of running time, followed closely by A2. On the other hand, A3 and A4 perform the best in terms of tree depth. Because an increasing $\mu$ implies an increasing number of eliminated solutions at each insertion (and, in general, fewer solutions stored in the data structure at any time), the time taken to insert solutions decreases as the value of $\mu$ increases. Furthermore, the larger the value of $\mu$, the smaller the gap between the CPU time spent by the linked list and that the BoT. 
We conclude that the BoT is more scalable and performs better than a linked list for storing randomly generated solutions. Of all rebalancing algorithms tested, only A1 appears to be inefficient, while no rebalancing at all (algorithm A0) seems to have a limited impact on the performance. The results suggest that A2 seems to combine good performance in terms of time and maximum tree depth, and is therefore the rebalancing algorithm of choice for our next tests.


\begin{figure}
\centering
\includegraphics[width=.70\textwidth]{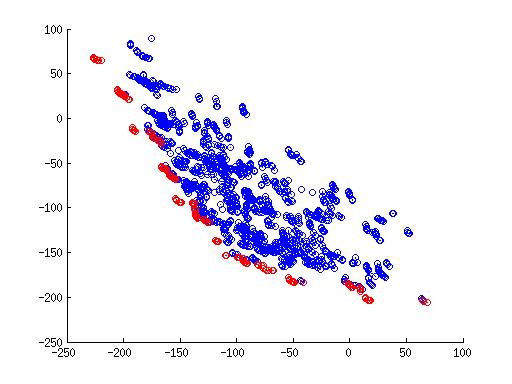}
\caption{Set of all solutions encountered by the BB method
in \citet{belotti2012biobjective} with 80 variables and 80 constraints. The red dots are the Pareto optimal solutions.
}\label{warm_start}
\end{figure}


\subsection{Using a BoT in branch-and-bound algorithms}
\label{tests-bb}

To test more in depth the utility of a BoT when utilized in a BB, we performed another set of tests in which a variety of instances of BOMILP were solved using branch-and-bound. To ensure the results were not too dependent on one solver, we ran our tests on two implementations, described in \citet{belotti2012biobjective} and \citet{adelgren2016}. The latter utilizes this data structure not only for storing found solutions, but also to check for domination of bound sets and hence, for fathoming. In addition, it uses {\em warm starting}, i.e., it generates an initial set of points before the BB is started. This BB was able to solve larger instances in under 8 hours and also did not run into numerical difficulties when solving instances of type II from \citet{boland2013biobjective}. For both BB solvers, we have used approach A2 for rebalancing.


We used three classes of BOMILP instances: two sets of instances described and used in \citet{belotti2012biobjective} and \citet{boland2013biobjective} and one that we have created. 
%
%
The latter set contains instances, which we call {\em geometrical}, that, while of rather simple structure, contain by design a large number of Pareto points. Testing these instances forces the BB algorithms to create numerous  BB nodes and solutions, thus  assessing a BoT in a more realistic, large-scale setting where the number of solutions inserted are closer to those of our first experiment. These are very simple BOMILPs with a large Pareto set for which a single BB node is solved very quickly. The model is as follows:
\begin{equation}
  \label{eq:newinst}
  \begin{array}{llrrr}
    \max & x \\
    \max & y \\
    \textrm{s.t.} & y - s_i x &\leq &  b_i - a_i s_i & \quad \forall i \in \{1,\dots,k\}\\
                  & x - 0.1 z &\leq &  0.025 \\
                  & x - 0.1 z &\geq & -0.025 \\
                  & \multicolumn{3}{l}{x,y \geq 0, z \in \Z,}
  \end{array}
\end{equation}
where $a_1 = 0$, $a_{i+1} = a_i + \frac{N}{k}$ for $i \in \{1,\dots,k\}$, $b_i = \sqrt{N^2-a_i^2}$ for $i \in \{1,\dots,k+1\}$, $s_i = \frac{b_{i+1}-b_i}{a_{i+1}-a_i}$ for $i \in \{1,\dots,k\}$, and $N$ and $k$ are parameters.



The results from the branch-and-bound tests are given in Tables \ref{table:BSWold} to \ref{table:AGgraph}. We use T and L to represent the implementations using a BoT and the dynamic list, respectively. From \citet{belotti2012biobjective}, 30 instances are available for each problem size and 5 (resp.\mbox{} 4) from \citet{boland2013biobjective} for each problem size of Type I (resp.\mbox{} II). The BB by 
\citet{belotti2012biobjective} works by creating $M$ initial Pareto points via solving $M$ single-objective MIPs, hence we tested using different values of $M$. In Tables \ref{table:BSWold} and \ref{table:AGold}, we report geometric means of the results on instances from \citet{belotti2012biobjective}, and provide results for every instance from \citet{boland2013biobjective} (with geometric means reported in bold). We report the total solve time, the time spent inserting solutions (Time (insert)), the time spent doing other data structure operations such as fathoming (Time (other), only reported for the BB by \citet{adelgren2016} as it is negligible otherwise), the number of BB nodes, the number of insertions, the number of nodes in the structure at termination (\#nodes) and the depth of the BoT. Again, problem sizes are reported as the number of variables, which in most cases also equals the number of constraints (for instances from \citet{boland2013biobjective} of Type II, the number of constraints is slightly greater than the number of variables).

In principle, a BB algorithm is expected to follow the same path regardless of the data structure, with the same number of BB nodes explored and of solutions stored. While this is true for most tests, there were discrepancies when using the BB by \citet{adelgren2016} for solving instances from \citet{belotti2012biobjective} and \citet{boland2013biobjective}. We show the number of BB nodes, of insertions, and of nodes in Table \ref{table:AGold} when these differ, in general by a small amount but up to 3\% in a few cases. Extra tests suggest that these discrepancies are due to small differences in the solutions contained in either data structure: the insertion of a solution $\pi$ into a BoT and a list that have, initially, the same set of solutions may result in different sets if, for instance, the extremes of segment $\pi.S$ are very close to previous solutions, say within a tolerance of $10^{-4}$, since the operations of Algorithm \ref{insert} allow for multiple changes in a subtree of the BoT. This effect can be amplified as it may result in one BB node to be fathomed or explored, the latter resulting in an entire new BB subtree being explored.

\begin{table}
\begin{center}
{
\caption{
BB tests: results for algorithm of \citet{belotti2012biobjective}. 
} \label{table:BSWold}
}
{\scriptsize
\begin{tabular}{rr|rr|rr|r|r|r|r}
\toprule
     &     & \multicolumn{2}{c|}{Time}                      & \multicolumn{2}{c|}{Time (insert)}              & \multicolumn{1}{c|}{\#BBnodes}        & \multicolumn{1}{c|}{\#inserts}       & \multicolumn{1}{c|}{\#nodes} & tree\\ 
Size & $M$ & \multicolumn{1}{c}{T} & \multicolumn{1}{c|}{L} & \multicolumn{1}{c}{T} & \multicolumn{1}{c|}{L} & \multicolumn{1}{c|}{ $(\times 10^2)$} & \multicolumn{1}{c|}{$(\times 10^3)$} & \multicolumn{1}{c|}{}        & depth\\  
\midrule
\multicolumn{10}{c}{Instances from \citet{belotti2012biobjective} (averaged)}\\
\midrule
60	&	10	&	15.0	&	15.1	&	0.007	&	0.007	&	8.11	&	1.790	&	62.6	&	6.3	\\
	&	25	&	11.6	&	11.6	&	0.005	&	0.004	&	6.32	&	1.155	&	60.0	&	6.6	\\
	&	50	&	16.4	&	16.1	&	0.012	&	0.007	&	8.22	&	1.838	&	64.1	&	6.4	\\
80	&	10	&	38.0	&	38.2	&	0.011	&	0.009	&	13.55	&	2.729	&	82.9	&	6.9	\\
	&	25	&	32.6	&	32.9	&	0.005	&	0.006	&	11.42	&	2.046	&	82.1	&	7.0	\\
	&	50	&	42.6	&	42.9	&	0.011	&	0.008	&	14.11	&	2.988	&	80.6	&	6.9	\\
\midrule
\multicolumn{10}{c}{Instances from \citet{boland2013biobjective}, type 1}\\
\midrule
80	&	50	&	101	&	102	&	0.93	&	1.70	&	38	&	68	&	931	&	11	\\
	&		&	76	&	78	&	0.21	&	1.05	&	34	&	120	&	565	&	10	\\
	&		&	88	&	90	&	0.25	&	1.06	&	47	&	65	&	857	&	11	\\
	&		&	61	&	61	&	0.15	&	0.62	&	28	&	59	&	900	&	11	\\
	&		&	66	&	66	&	0.11	&	0.37	&	44	&	52	&	706	&	10	\\
\cmidrule{3-10} 																			
	&		&\bf	77	&\bf	78	&\bf	0.24	&\bf	0.85	&\bf	37	&\bf	70	&\bf	779	&\bf	11	\\
\cmidrule{3-10} 																			
	&	200	&	44	&	45	&	0.10	&	0.27	&	22	&	38	&	926	&	11	\\
	&		&	32	&	33	&	0.06	&	0.27	&	15	&	38	&	571	&	10	\\
	&		&	49	&	50	&	0.13	&	0.29	&	26	&	27	&	860	&	11	\\
	&		&	38	&	38	&	0.17	&	0.32	&	18	&	38	&	903	&	11	\\
	&		&	44	&	45	&	0.07	&	0.30	&	29	&	31	&	717	&	11	\\
\cmidrule{3-10} 
	&		&\bf	41	&\bf	42	&\bf	0.10	&\bf	0.29	&\bf	21	&\bf	34	&\bf	783	&\bf	11	\\
\cmidrule{3-10} 																			
	&	300	&	57	&	57	&	0.49	&	0.75	&	21	&	35	&	936	&	11	\\
	&		&	33	&	34	&	0.05	&	0.32	&	15	&	37	&	568	&	10	\\
	&		&	48	&	49	&	0.05	&	0.19	&	24	&	24	&	862	&	11	\\
	&		&	37	&	38	&	0.08	&	0.30	&	16	&	35	&	906	&	11	\\
	&		&	43	&	43	&	0.08	&	0.22	&	27	&	28	&	724	&	11	\\
\cmidrule{3-10} 																			
	&		&\bf	43	&\bf	43	&\bf	0.10	&\bf	0.31	&\bf	20	&\bf	31	&\bf	786	&\bf	11	\\
\cmidrule{3-10} 																			
160	&	500	&	13584	&	13862	&	49.30	&	177.68	&	1502	&	3274	&	2272	&	20	\\
	&		&	15992	&	15790	&	14.80	&	69.07	&	2469	&	2807	&	2102	&	22	\\
	&		&	11721	&	11857	&	8.12	&	89.31	&	2030	&	1858	&	2176	&	19	\\
	&		&	5354	&	5442	&	8.91	&	90.58	&	819	&	1319	&	4600	&	14	\\
	&		&	2384	&	2384	&	2.89	&	17.53	&	364	&	572	&	2366	&	22	\\
\cmidrule{3-10} 																			
	&		&\bf	7987	&\bf	8043	&\bf	10.88	&\bf	70.49	&\bf	1176	&\bf	1667	&\bf	2575	&\bf	19	\\
\cmidrule{3-10} 																			
	&	2000	&	10933	&	11088	&	34.37	&	165.50	&	1219	&	2484	&	2272	&	14	\\
	&		&	12505	&	12489	&	13.33	&	59.33	&	1954	&	2156	&	2111	&	12	\\
	&		&	10623	&	10606	&	6.35	&	74.90	&	1847	&	1448	&	2193	&	14	\\
	&		&	4846	&	5005	&	8.66	&	84.26	&	709	&	958	&	4597	&	14	\\
	&		&	2028	&	2062	&	2.28	&	14.83	&	304	&	461	&	2379	&	13	\\
\cmidrule{3-10} 																			
	&		&\bf	6775	&\bf	6857	&\bf	8.95	&\bf	62.04	&\bf	989	&\bf	1280	&\bf	2583	&\bf	13	\\
\cmidrule{3-10} 																			
	&	3000	&	7775	&	7852	&	12.58	&	79.05	&	1193	&	2418	&	2274	&	13	\\
	&		&	11940	&	12051	&	13.67	&	62.07	&	1900	&	2090	&	2109	&	13	\\
	&		&	9872	&	10298	&	7.32	&	74.13	&	1817	&	1406	&	2195	&	13	\\
	&		&	4451	&	4532	&	6.56	&	68.13	&	696	&	917	&	4604	&	14	\\
	&		&	1994	&	2012	&	2.41	&	14.59	&	297	&	449	&	2382	&	13	\\
\cmidrule{3-10} 																			
	&\		&\bf	6054	&\bf	6162	&\bf	7.24	&\bf	51.48	&\bf	968	&\bf	1240	&\bf	2585	&\bf	13	\\
 \bottomrule
\end{tabular} }
\end{center}
\end{table}

\begin{table}
\begin{center}
\scriptsize
\caption{BB tests: algorithm of \citet{adelgren2016}.}
\label{table:AGold}
\begin{tabular}{rrr|rr|rr|r|rr|rr|r}
\toprule
  &   \multicolumn{2}{c|}{Time} & \multicolumn{2}{c|}{Time (insert)} & \multicolumn{2}{c|}{Time (other)} & \multicolumn{1}{c|}{\#BBnodes} & \multicolumn{2}{c|}{\#inserts} & \multicolumn{2}{c|}{\#nodes} & tree  \\ 
Size & T & L& T & L& T & L &   & T & L& T & L& depth \\ 
\midrule
\multicolumn{13}{c}{Instances from \citet{belotti2012biobjective} (averaged)}\\
\midrule
	60	&	5	&	5	&	0.00	&	0.00	&	0.00	&	0.01	&	76	&	189	&	   	&	65	&	64	&	6	\\
	80	&	13	&	13	&	0.00	&	0.00	&	0.01	&	0.01	&	109	&	247	&	   	&	83	&		&	7	\\
\midrule
\multicolumn{13}{c}{Instances from \citet{boland2013biobjective}, type 1}\\
\midrule
	80	&	10	&	11	&	0.01	&	0.09	&	0.01	&	0.15	&	363	&	6429	&	    	&	943	&	930	&	11	\\
		&	5	&	5	&	0.02	&	0.03	&	0.00	&	0.13	&	220	&	5066	&	    	&	590	&	583	&	12	\\
		&	18	&	19	&	0.01	&	0.16	&	0.02	&	0.45	&	580	&	9491	&	    	&	885	&	883	&	10	\\
		&	6	&	7	&	0.02	&	0.05	&	0.01	&	0.15	&	235	&	4919	&	    	&	928	&	920	&	11	\\
		&	7	&	7	&	0.02	&	0.07	&	0.01	&	0.09	&	269	&	4795	&	    	&	737	&	732	&	10	\\
\cmidrule{2-13}																											
		&\bf	8	&\bf	8	&\bf	0.02	&\bf	0.07	&\bf	0.01	&\bf	0.16	&\bf	311	&\bf	5923	&\bf	    	&\bf	804	&\bf	797	&\bf	11	\\
\cmidrule{2-13}																											
	160	&	205	&	211	&	0.12	&	1.47	&	0.09	&	3.44	&	1418	&	45501	&	     	&	2293	&	2285	&	17	\\
		&	129	&	139	&	0.09	&	1.91	&	0.12	&	2.80	&	997	&	41534	&	     	&	2160	&	2136	&	13	\\
		&	104	&	107	&	0.07	&	0.64	&	0.08	&	1.59	&	924	&	29895	&	     	&	2250	&	2203	&	13	\\
		&	400	&	414	&	0.30	&	4.86	&	0.13	&	9.76	&	2315	&	79421	&	     	&	4785	&	4675	&	17	\\
		&	108	&	111	&	0.03	&	0.82	&	0.07	&	1.94	&	743	&	22397	&	     	&	2489	&	2408	&	13	\\
\cmidrule{2-13}																											
		&\bf	164	&\bf	170	&\bf	0.09	&\bf	1.48	&\bf	0.10	&\bf	3.11	&\bf	1176	&\bf	39850	&\bf	     	&\bf	2658	&\bf	2610	&\bf	14	\\
\cmidrule{2-13}																											
	320	&	6266	&	6713	&	1.70	&	205.34	&	1.16	&	189.91	&	6436	&	486497	&	487183	&	11184	&	10970	&	48	\\
		&	4892	&	5457	&	1.84	&	243.26	&	0.74	&	242.09	&	5278	&	354556	&	354343	&	12070	&	11689	&	18	\\
		&	2742	&	2924	&	1.41	&	105.24	&	0.30	&	65.32	&	3367	&	246820	&	      	&	12488	&	12138	&	15	\\
		&	5291	&	5629	&	1.56	&	176.53	&	0.57	&	132.42	&	5925	&	364204	&	363980	&	12863	&	12531	&	16	\\
		&	2404	&	2522	&	1.38	&	49.21	&	0.43	&	60.80	&	3773	&	223789	&	      	&	9955	&	9703	&	18	\\
\cmidrule{2-13}																											
		&\bf	4035	&\bf	4329	&\bf	1.57	&\bf	135.49	&\bf	0.58	&\bf	119.31	&\bf	4803	&\bf	322156	&\bf	322168	&\bf	11664	&\bf	11361	&\bf	21	\\
\midrule
\multicolumn{13}{c}{Instances from \citet{boland2013biobjective}, type 2}\\
\midrule
	800	&	1	&	1	&	0.00	&	0.00	&	0.00	&	0.00	&	39	&	254	&	   	&	54	&	  	&	6	\\
		&	2	&	2	&	0.00	&	0.00	&	0.01	&	0.00	&	46	&	279	&	   	&	64	&	  	&	7	\\
		&	4	&	4	&	0.00	&	0.00	&	0.00	&	0.00	&	98	&	589	&	   	&	90	&	  	&	7	\\
		&	9	&	9	&	0.00	&	0.01	&	0.01	&	0.01	&	180	&	900	&	   	&	129	&	128	&	7	\\
\cmidrule{2-13}																											
		&\bf	3	&\bf	3	&\bf	0.00	&\bf	0.00	&\bf	0.00	&\bf	0.00	&\bf	75	&\bf	440	&\bf	   	&\bf	80	&\bf	79	&\bf	7	\\
\cmidrule{2-13}																											
	1250	&	9	&	9	&	0.00	&	0.00	&	0.01	&	0.01	&	147	&	867	&	   	&	140	&	   	&	8	\\
		&	19	&	19	&	0.00	&	0.00	&	0.00	&	0.02	&	310	&	1922	&	1935	&	200	&	201	&	8	\\
		&	19	&	19	&	0.02	&	0.00	&	0.02	&	0.06	&	302	&	1962	&	    	&	245	&	   	&	9	\\
		&	46	&	46	&	0.01	&	0.01	&	0.00	&	0.06	&	542	&	3328	&	    	&	281	&	   	&	9	\\
\cmidrule{2-13}																											
		&\bf	20	&\bf	20	&\bf	0.01	&\bf	0.00	&\bf	0.01	&\bf	0.03	&\bf	294	&\bf	1816	&\bf	1819	&\bf	210	&\bf	210	&\bf	8	\\
\cmidrule{2-13}																											
	2500	&	514	&	515	&	0.04	&	0.13	&	0.29	&	1.42	&	2893	&	7650	&	    	&	306	&	   	&	9	\\
		&	944	&	945	&	0.01	&	0.18	&	0.31	&	2.28	&	3831	&	9409	&	9406	&	370	&	   	&	9	\\
		&	1243	&	1244	&	0.02	&	0.21	&	0.45	&	2.73	&	4007	&	9849	&	    	&	451	&	449	&	10	\\
		&	3185	&	3199	&	0.02	&	0.45	&	0.86	&	6.24	&	8468	&	18665	&	     	&	386	&	385	&	9	\\
\cmidrule{2-13}																											
		&\bf	1177	&\bf	1180	&\bf	0.02	&\bf	0.22	&\bf	0.43	&\bf	2.73	&\bf	4404	&\bf	10725	&\bf	10724	&\bf	375	&\bf	374	&{\bf	9}	\\
 \bottomrule
\end{tabular}
\end{center}
\end{table}

Tables \ref{table:BSWold} and \ref{table:AGold} show that the time spent by the BoT is much less than that spent using the linked list. The total BB time shadows this dominance as insertion time is a negligible fraction of the total solve time. However, these results confirm those of experiment 1: even within a BB, insertion into a linked list often takes orders of magnitude longer than for a BoT. For all instances of the two considered classes, the number of insertions amounts to up to a few million yet the final structure only has a few thousand solutions (see e.g.\mbox{} Table \ref{table:BSWold}, instances by \citet{boland2013biobjective}, for size 160); this suggests a pattern similar to that with larger values of $\mu$ (see Tables \ref{table_results_a}-\ref{table_results_c}). 

Both solvers that we used are rather rudimentary implementations---the alternative method by \citet{boland2013biobjective} easily outperforms the BB solver by \citet{belotti2012biobjective}. This explains why the BB times reported can be hours even if the BB algorithms visit up to only a few thousands of BB nodes, well below the current state-of-the-art BB algorithms, where millions, or tens of millions, of nodes can be explored in the same time for much larger instances.  

\begin{table}
\begin{center}
\scriptsize
\caption{
BB tests: algorithm of \citet{belotti2012biobjective} on geometrical instances. 
} \label{table:BSWgraph}
\begin{tabular}{rrrr|rr|r|r|r|rrr}
\toprule
&  & \multicolumn{2}{c}{Time} & \multicolumn{2}{c}{Time(insert)} & \multicolumn{1}{c}{\#BBnodes}  & \multicolumn{1}{c}{\#inserts}          & \multicolumn{1}{c}{\#nodes}     & tree\\ 
$N$ & $k$ & T & L & T & L & \multicolumn{1}{c}{($\times 10^3$)} & \multicolumn{1}{c}{($\times 10^3$)} & \multicolumn{1}{c}{($\times 10^3$)} & depth\\
\midrule
500	&	1	&	15	&	25	&	0.8	&	6.8	&	10	&	35	&	5	&	24	\\
	&	5	&	15	&	16	&	0.6	&	2.3	&	10	&	35	&	5	&	23	\\
	&	25	&	21	&	35	&	1.0	&	6.2	&	10	&	34	&	5	&	25	\\
	&	125	&	48	&	47	&	0.5	&	2.2	&	10	&	33	&	5	&	27	\\
\cmidrule{3-10} 																			
	&		&\bf	22	&\bf	28	&\bf	0.7	&\bf	3.8	&\bf	10	&\bf	34	&\bf	5	&\bf	25	\\
\cmidrule{3-10} 																			
1000	&	1	&	36	&	39	&	2.5	&	10.3	&	20	&	70	&	10	&	25	\\
	&	5	&	39	&	41	&	2.5	&	11.2	&	20	&	70	&	10	&	26	\\
	&	25	&	47	&	58	&	2.6	&	13.8	&	20	&	69	&	10	&	28	\\
	&	125	&	103	&	115	&	3.1	&	16.6	&	20	&	67	&	10	&	37	\\
\cmidrule{3-10} 																			
	&		&\bf	51	&\bf	57	&\bf	2.6	&\bf	12.7	&\bf	20	&\bf	69	&\bf	10	&\bf	29	\\
\cmidrule{3-10} 																			
2000	&	1	&	111	&	185	&	9.0	&	95.7	&	40	&	140	&	20	&	23	\\
	&	5	&	100	&	186	&	8.3	&	96.2	&	40	&	140	&	20	&	30	\\
	&	25	&	133	&	212	&	9.2	&	96.9	&	40	&	139	&	20	&	32	\\
	&	125	&	224	&	345	&	9.7	&	113.7	&	40	&	137	&	20	&	40	\\
\cmidrule{3-10} 																			
	&		&\bf	135	&\bf	224	&\bf	9.0	&\bf	100.4	&\bf	40	&\bf	139	&\bf	20	&\bf	31	\\
\cmidrule{3-10} 																			
4000	&	1	&	370	&	966	&	41.9	&	681.9	&	80	&	280	&	40	&	32	\\
	&	5	&	350	&	1070	&	53.0	&	774.3	&	80	&	280	&	40	&	31	\\
	&	25	&	409	&	1025	&	42.6	&	691.1	&	80	&	279	&	40	&	30	\\
	&	125	&	650	&	1308	&	48.4	&	751.8	&	80	&	277	&	40	&	27	\\
\cmidrule{3-10} 																			
	&		&\bf	431	&\bf	1085	&\bf	46.2	&\bf	723.7	&\bf	80	&\bf	279	&\bf	40	&\bf	30	\\
\cmidrule{3-10} 																			
8000	&	1	&	1952	&	6275	&	339.2	&	5185.5	&	160	&	560	&	80	&	39	\\
	&	5	&	1861	&	4786	&	289.5	&	3893.4	&	160	&	560	&	80	&	38	\\
	&	25	&	2113	&	5363	&	329.6	&	4339.8	&	160	&	559	&	80	&	34	\\
	&	125	&	3793	&	6189	&	329.2	&	4679.2	&	160	&	556	&	80	&	89	\\
\cmidrule{3-10} 																			
	&		&\bf	2323	&\bf	5619	&\bf	321.3	&\bf	4499.8	&\bf	160	&\bf	559	&\bf	80	&\bf	46	\\
\cmidrule{3-10} 																			
16000	&	1	&	8973	&	22803	&	1627.5	&	19345.8	&	320	&	1120	&	160	&	41	\\
	&	5	&	8059	&	26171	&	1638.7	&	22958.1	&	320	&	1120	&	160	&	41	\\
	&	25	&	12250	&	23252	&	1970.6	&	20019.5	&	320	&	1119	&	160	&	41	\\
	&	125	&	13464	&	29552	&	1674.1	&	24950.9	&	320	&	1116	&	160	&	118	\\
\cmidrule{3-10} 																			
	&		&\bf	10450	&\bf	25305	&\bf	1722.3	&\bf	21702.8	&\bf	320	&\bf	1119	&\bf	160	&\bf	53	\\
 \bottomrule
\end{tabular} 
\end{center}
\end{table}

\begin{table}
\begin{center}
\scriptsize
\caption{
BB tests: algorithm of \citet{adelgren2016} on geometrical instances. 
} \label{table:AGgraph}
\begin{tabular}{rrrr|rr|rr|r|r|r|r}
\toprule
&  & \multicolumn{2}{c}{Time} & \multicolumn{2}{c}{Time (insert)} & \multicolumn{2}{c}{Time (other)} & \multicolumn{1}{c}{\#BBnodes}       & \multicolumn{1}{c}{\#inserts} & \multicolumn{1}{c}{\#nodes} & tree\\ 
$N$ & $k$& T & L& T & L	& T & L	& \multicolumn{1}{c}{($\times 10^3$)} & \multicolumn{1}{c}{($\times 10^3$)}& \multicolumn{1}{c}{($\times 10^3$)} & depth \\
\midrule
	500	&	1	&	12	&	56	&	1.93	&	16.13	&	0.09	&	23.85	&	10	&	20	&	5	&	16	\\
		&	5	&	119	&	158	&	1.08	&	11.37	&	0.16	&	21.86	&	10	&	20	&	5	&	15	\\
		&	25	&	131	&	160	&	0.85	&	9.59	&	0.12	&	18.19	&	10	&	20	&	5	&	15	\\
		&	125	&	163	&	200	&	0.93	&	13.19	&	0.14	&	20.78	&	10	&	20	&	5	&	14	\\
\cmidrule{3-12}																								
		&		&\bf	75	&\bf	130	&\bf	1.13	&\bf	12.34	&\bf	0.12	&\bf	21.07	&\bf	10	&\bf	20	&\bf	5	&\bf	15	\\
\cmidrule{3-12}																								
	1000	&	1	&	39	&	273	&	5.31	&	90.61	&	0.16	&	104.85	&	20	&	40	&	10	&	16	\\
		&	5	&	458	&	707	&	3.13	&	60.20	&	0.45	&	133.60	&	20	&	40	&	10	&	15	\\
		&	25	&	511	&	762	&	3.09	&	66.47	&	0.38	&	135.67	&	20	&	40	&	10	&	15	\\
		&	125	&	619	&	895	&	3.08	&	69.18	&	0.38	&	149.48	&	20	&	40	&	10	&	16	\\
\cmidrule{3-12}																								
		&		&\bf	273	&\bf	603	&\bf	3.55	&\bf	70.77	&\bf	0.32	&\bf	129.83	&\bf	20	&\bf	40	&\bf	10	&\bf	15	\\
\cmidrule{3-12}																								
	2000	&	1	&	133	&	1346	&	14.91	&	389.46	&	0.50	&	546.32	&	40	&	80	&	20	&	17	\\
		&	5	&	1837	&	3314	&	16.28	&	349.33	&	1.45	&	756.48	&	40	&	80	&	20	&	18	\\
		&	25	&	2032	&	3251	&	13.38	&	305.68	&	1.54	&	607.06	&	40	&	81	&	21	&	20	\\
		&	125	&	2298	&	3725	&	13.47	&	316.52	&	1.68	&	760.57	&	40	&	81	&	20	&	18	\\
\cmidrule{3-12}																								
		&		&\bf	1034	&\bf	2711	&\bf	14.46	&\bf	338.72	&\bf	1.17	&\bf	660.93	&\bf	40	&\bf	81	&\bf	20	&\bf	18	\\
\cmidrule{3-12}																								
	4000	&	1	&	996	&	6695	&	80.32	&	1596.64	&	3.07	&	2645.33	&	80	&	160	&	40	&	18	\\
		&	5	&	8530	&	15352	&	106.16	&	1518.61	&	8.59	&	3974.04	&	80	&	160	&	40	&	18	\\
		&	25	&	8273	&	14081	&	99.59	&	1423.79	&	5.36	&	3189.86	&	80	&	162	&	42	&	18	\\
		&	125	&	10466	&	16394	&	91.13	&	1386.03	&	7.60	&	3239.40	&	80	&	160	&	40	&	18	\\
\cmidrule{3-12}																								
		&		&\bf	5208	&\bf	12411	&\bf	93.79	&\bf	1479.00	&\bf	5.73	&\bf	3228.40	&\bf	80	&\bf	160	&\bf	40	&\bf	18	\\
 \bottomrule
\end{tabular} 
\end{center}
\end{table}

The advantages of BoT over list are more apparent on geometrical instances, which, albeit simple, allow for a thorough stress test of both the BB solvers and the data structures. For these instances, the time spent by the data structures is a significant portion of the total CPU time. All BB nodes are solved very quickly, and as a result many solutions are found and many BB nodes are explored in a short time. Inserting all of these solutions, while efficient for a BoT, requires much more computational effort for a linked list, to the point that the amount of time spent inserting solutions in a large linked list becomes a substantial percentage of the total CPU time.

These results suggest that an efficient data structure will be essential when faster and more stable BB solvers, for instance using quicker fathoming rules \citep{belotti2016fathoming}, become available: as the time for each BB node decreases and the number of inserted solutions increases, the data structure must be efficient enough that the insertion function only takes a small percentage of the total CPU time.


\section{Concluding remarks}

We have introduced the {\em biobjective tree}, a variant of a binary tree to efficiently store sets of nondominated solutions of BOMILPs. The BoT is equipped with an insertion procedure for adding  points and possibly eliminating several other points that are dominated by the newly inserted one. The BoT also has the desirable property that an {\em in-order} pass produces a sorted list of nondominated points. We tested the practical value of the BoT with two experiments. The results show that a BoT provides a more efficient method for storing solutions to BOMILP than a list of points. They also show that a BoT is also a very useful tool when used in BB methods for solving BOMILPs.

It is worth noting here that we have focused on a simple BST data structure, instead of one with better balancing properties, because we wanted to avoid the travails of studying and implementing complicated rebalancing procedures, mixed with the already taxing duty of maintaining other desirable properties of a BoT, i.e., the aforementioned ability to hold nondominated points at any step.

Data structures like the BoT are preferable over lists when inserting large numbers of points in an undefined order, such as in a BB algorithm. Because current state-of-the-art BB algorithms for BOMILP can tackle relatively small problems, one can observe the impact of the data structure in those classes of problems where BB node solution is fast, i.e., those problems that admit a simple and compact structure yet contain a very large number of nondominated points. We speculate that future implementations of a BB algorithm for BOMILP will expose even more clearly the advantages of a BST data structure. However, algorithms that find solutions in a non-random order might benefit from more specific, and perhaps less sophisticated, data structures.


\bibliographystyle{plainnat}
\bibliography{datastruct}

\end{document}